\spnewtheorem{clm}[theorem]{Claim}{\bfseries}{\itshape}
\newcommand{\bigoh}[1]{\mathit{O}(#1)}
\newcommand{\bigomega}[1]{\Omega(#1)}
\newcommand{\set}[1]{\{#1\}}
\newcommand{\ib}[1]{\llbracket#1\rrbracket}
\newcommand{\Res}{\mathit{Res}}
\newcommand{\Tseitin}{\mathit{Tseitin}}
\newcommand{\Ladder}{\mathit{Ladder}}
\newcommand{\CNDRD}{C_{\mathit{ND,RD}}}
\newcommand{\CVSPS}{C_{\mathit{VS,PS}}}
\newcommand{\CSS}{C_{\mathit{S,S}}}
\newcommand{\CNDND}{C_{\mathit{ND,ND}}}
\newcommand{\DNDarb}{D_{\mathit{ND,*}}}
\newcommand{\DNDND}{D_{\mathit{ND,ND}}}
\newcommand{\DNDRD}{D_{\mathit{ND,RD}}}
\begin{document}
\title{Towards a Complexity-theoretic Understanding\texorpdfstring{\\ of Restarts in SAT solvers}{of Restarts in SAT solvers}}
\titlerunning{Complexity-theoretic Understanding of Restarts in SAT solvers}
%
\author{Chunxiao Li\inst{1} \and
Noah Fleming\inst{2} \and Marc Vinyals\inst{3} \and\\ Toniann Pitassi\inst{2} \and
Vijay Ganesh\inst{1}}
\authorrunning{C. Li et al.}
%
\institute{University of Waterloo, Canada 
\and 
University of Toronto, Canada
\and
Technion, Israel}
\maketitle              
\begin{abstract}
Restarts are a widely-used class of techniques integral to the efficiency of Conflict-Driven Clause Learning (CDCL) Boolean SAT solvers. While the utility of such policies has been well-established empirically, a theoretical explanation of whether restarts are indeed crucial to the power of CDCL solvers is lacking.

\vspace{0.10cm}
In this paper, we prove a series of theoretical results that characterize the power of restarts for various models of SAT solvers. More precisely, we make the following contributions. First, we prove an exponential separation between a {\it drunk} randomized CDCL solver model with restarts and the same model without restarts using a family of satisfiable instances. Second, we show that the configuration of CDCL solver with VSIDS branching and restarts (with activities erased after restarts) is exponentially more powerful than the same configuration without restarts for a family of unsatisfiable instances. To the best of our knowledge, these are the first separation results involving restarts in the context of SAT solvers. Third, we show that restarts do not add any proof complexity-theoretic power vis-a-vis a number of models of CDCL and DPLL solvers with non-deterministic static variable and value selection.
\end{abstract}
\section{Introduction}

Over the last two decades, Conflict-Driven Clause Learning (CDCL) SAT solvers have had a revolutionary impact on many areas of software engineering, security and AI. This is primarily due to their ability to solve real-world instances containing millions of variables and clauses~\cite{marques1999grasp,moskewicz2001chaff,biere2009handbook,pipatsrisawat2011power,atserias2011clause}, despite the fact that the Boolean SAT problem is known to be an $\NP$-complete problem and is believed to be intractable in the worst case.

This remarkable success has prompted complexity theorists to seek an explanation for the efficacy of CDCL solvers, with the aim of bridging the gap between theory and practice. Fortunately, a few results have already been established that lay the groundwork for a deeper understanding of SAT solvers viewed as proof systems~\cite{BeameKS04,HertelBPG08,BussHJ08}. Among them, the most important result is the one by Pipatsrisawat and Darwiche~\cite{pipatsrisawat2011power} and independently by Atserias et al.~\cite{atserias2011clause}, that shows that an idealized model of CDCL solvers with non-deterministic branching (variable selection and value selection), and restarts is {\it polynomially equivalent} to the general resolution proof system. However, an important question that remains open is whether this result holds even when restarts are disabled, i.e., whether configurations of CDCL solvers without restarts (when modeled as proof systems) are polynomial equivalent to the general resolution proof system. In practice there is significant evidence that restarts are crucial to solver performance.

This question of the ``power of restarts'' has prompted considerable theoretical work. For example, Bonet, Buss and Johannsen~\cite{bonet2014improved} showed that CDCL solvers with no restarts (but with non-deterministic variable and value selection) are strictly more powerful than regular resolution. Despite this progress, the central questions, such as whether restarts are integral to the efficient simulation of general resolution by CDCL solvers, remain open. 

In addition to the aforementioned theoretical work, there have been many empirical attempts at understanding restarts given how important they are to solver performance. Many hypotheses have been proposed aimed at explaining the power of restarts. Examples include, the {\it heavy-tail} explanation~\cite{gomes2000heavy}, and the ``restarts compact assignment trail and hence produce clauses with lower literal block distance (LBD)'' perspective~\cite{liang2018machine}. Having said that, the heavy-tailed distribution explanation of the power of restarts is not considered valid anymore in the CDCL setting~\cite{liang2018machine}.

 \subsection{Contributions}

In this paper we make several contributions to the theoretical understanding of the power of restarts for several restricted models of CDCL solvers:

\begin{enumerate}
    \item First, we show that CDCL solvers with backtracking, non-deterministic dynamic variable selection, randomized value selection, and restarts\footnote{In keeping with the terminology from \cite{alekhnovich2006exponential}, we refer any CDCL solver with randomized value selection as a \emph{drunk} solver.} are exponentially faster than the same model, but without restarts, with high probability (w.h.p)\footnote{We say that an event occurs with high probability (w.h.p.) if the probability of that event happening goes to $1$ as $n \rightarrow \infty$.}. A notable feature of our proof is that we obtain this separation on a family of satisfiable instances. (See Section~\ref{section: CDCL-ND} for details.)
    
    \item Second, we prove that CDCL solvers with VSIDS variable selection, phase saving value selection and restarts (where activities of variables are reset to zero after restarts) are exponentially faster (w.h.p) than the same solver configuration but without restarts for a class of unsatisfiable formulas. This result holds irrespective of whether the solver uses backtracking or backjumping. (See Section~\ref{section:CDCL-VSIDS} for details.)
    
    \item Finally, we prove several smaller separation and equivalence results for various configurations of CDCL and DPLL solvers with and without restarts. For example, we show that CDCL solvers with non-deterministic static variable selection, non-deterministic static value selection, and with restarts, are polynomially equivalent to the same model but without restarts. Another result we show is that for DPLL solvers, restarts do not add proof theoretic power as long as the solver configuration has non-deterministic dynamic variable selection. (See Section~\ref{section: assorted_results} for details.)
\end{enumerate}

\section{Definitions and Preliminaries}
Below we provide relevant definitions and concepts used in this paper. We refer the reader to the Handbook of Satisfiability~\cite{biere2009handbook} for literature on CDCL and DPLL solvers and to~\cite{Krajicek19ProofComplexity,BeameP01} for literature on proof complexity. 

We denote by $[c]$ the set of natural numbers $\{1,\ldots, c\}$. We treat CDCL solvers as proof systems. For proof systems $A$ and $B$, we use $A \sim_p B$ to denote that they are polynomially equivalent ($p$-equivalent). Throughout this paper it is convenient to think of the trail $\pi$ of the solver during its run on a formula $F$ as a restriction to that formula. We call a function $\pi\colon\{x_1,\ldots,x_n\} \rightarrow \{0,1,*\}$ a \textit{restriction}, where $*$ denotes that the variable is unassigned by $\pi$. Additionally, we assume that our Boolean Constraint Propagation (BCP) scheme is greedy, i.e., BCP is performed till ``saturation''.

\noindent{\bf Restarts in SAT solvers.} A restart policy is a method that erases  part of the state of the solver at certain intervals during the run of a solver~\cite{gomes2000heavy}. In most modern CDCL solvers, the restart policy erases the assignment trail upon invocation, but may choose not to erase the learnt clause database or variable activities. Throughout this paper, we assume that all restart policies are non-deterministic, i.e., the solver may (dynamically) non-deterministically choose its restart sequence. We refer the reader to a paper by Liang et al.~\cite{liang2018machine} for a detailed discussion on modern restart policies.
\section{Notation for Solver Configurations Considered} \label{subsection: notation for solvers}

In this section, we precisely define the various heuristics used to define SAT solver configurations in this paper. By the term {\it solver configuration} we mean a solver parameterized with appropriate heuristic choices. For example, a CDCL solver with non-deterministic variable and value selection, as well as asserting learning scheme with restarts would be considered a solver configuration. 

To keep track of these configurations, we denote solver configurations by the notation $M_{A,B}^{E,R}$, where $M$ indicates the underlying \emph{solver model} (we use $C$ for CDCL and $D$ for DPLL solvers); the subscript $A$ denotes a \emph{variable selection scheme}; the subscript $B$ is a \emph{value selection scheme}; the superscript $E$ is a \emph{backtracking scheme}, and finally the superscript $R$ indicates whether the solver configuration comes equipped with a restart policy. That is, the presence of the superscript $R$ indicates that the configuration has restarts, and its absence indicates that it does not. A $*$ in place of $A,B$ or $E$ denotes that the scheme is \emph{arbitrary}, meaning that it works for any such scheme. See Table~\ref{table: solver configurations} for examples of solver configurations studied in this paper.

\begin{table}[t]
\centering
\caption{Solver configurations in the order they appear in the paper. ND stands for non-deterministic dynamic.}
\label{table: solver configurations}
\renewcommand*{\arraystretch}{1.2}
\begingroup
\setlength{\tabcolsep}{4pt} 
\begin{tabular}{|l|l|l|l|l|l|}
\toprule
& Model & Variable Selection        & Value Selection & Backtracking & Restarts \\
\midrule
$\CNDRD^{T, R}$ & CDCL         & ND & Random Dynamic            & Backtracking        & Yes      \\
$\CNDRD^T$ & CDCL         & ND & Random Dynamic            & Backtracking        & No       \\
$\CVSPS^{J, R}$ & CDCL         & VSIDS                     & Phase Saving              & Backjumping         & Yes      \\
$\CVSPS^J$      & CDCL         & VSIDS                     & Phase Saving              & Backjumping         & No       \\
$\CSS^{J,R}$    & CDCL         & Static                    & Static                    & Backjumping         & Yes      \\
$\CSS^J$        & CDCL         & Static                    & Static                    & Backjumping         & No       \\
$\DNDarb^T$     & DPLL         & ND & Arbitrary                 & Backtracking        & No       \\
$\DNDND^{T, R}$ & DPLL         & ND & ND & Backtracking        & Yes      \\
$\DNDND^T$      & DPLL         & ND & ND & Backtracking        & No       \\
$\DNDRD^{T, R}$ & DPLL         & ND & Random Dynamic            & Backtracking        & Yes      \\
$\DNDRD^T$      & DPLL         & ND & Random Dynamic            & Backtracking        & No       \\
$\CNDND^{J, R}$ & CDCL         & ND & ND & Backjumping         & Yes      \\
$\CNDND^J$      & CDCL         & ND & ND & Backjumping         & No  \\
\bottomrule
\end{tabular}
\endgroup
\end{table}

\subsection{Variable Selection Schemes}
    \noindent{\bf 1. Static (S):} Upon invocation, the S variable selection heuristic returns the unassigned variable with the highest rank according to some predetermined, fixed, total ordering of the variables.
    
    \noindent{\bf 2. Non-deterministic Dynamic (ND):} The ND variable selection scheme non-deterministically selects and returns an unassigned variable.
    
    \noindent{\bf 3. VSIDS (VS)~\cite{moskewicz2001chaff}:} Each variable has an associated number, called its {\it activity}, initially set to 0. Each time a solver learns a conflict, the activities of variables appearing on the conflict side of the implication graph receive a constant bump. The activities of all variables are decayed by a constant c, where $0 < c < 1$, at regular intervals. The VSIDS variable selection heuristic returns the unassigned variable with highest activity, with ties broken randomly.

\subsection{Value Selection Schemes}
    \noindent{\bf 1. Static (S):} Before execution, a 1-1 mapping of variables to values is fixed. The S value selection heuristic takes as input a variable and returns the value assigned to that variable according to the predetermined mapping.
    
    \noindent{\bf 2. Non-deterministic Dynamic (ND):} The ND value selection scheme non-deterministically selects and returns a truth assignment.
    
    \noindent{\bf 3. Random Dynamic (RD):} A randomized algorithm that takes as input a variable and returns a uniformly random truth assignment.
    
    \noindent{\bf 4. Phase Saving (PS):} A heuristic that takes as input an unassigned variable and returns the previous truth value that was assigned to the variable. Typically solver designers determine what value is returned when a variable has not been previously assigned. For simplicity, we use the phase saving heuristic that returns 0 if the variable has not been previously assigned.

\subsection{Backtracking and Backjumping Schemes}
To define different backtracking schemes we use the concept of \emph{decision level} of a variable $x$, which is the number of decision variables on the trail prior to $x$.
	\noindent{\bf Backtracking (T):} Upon deriving a conflict clause, the solver undoes the most recent decision variable on the assignment trail.
	\noindent{\bf Backjumping (J):} Upon deriving a conflict clause, the solver undoes all decision variables with decision level higher than the variable with the second highest decision level in the conflict clause.

\noindent{\bf Note on Solver Heuristics.} Most of our results hold irrespective of the choice of deterministic asserting clause learning schemes (except for Proposition~\ref{clm:WDLS}). Additionally, it goes without saying that the questions we address in this paper make sense only when it is assumed that solver heuristics are polynomial time methods.

\section{Separation for Drunk CDCL with and without Restarts} \label{section: CDCL-ND}

Inspired by Alekhnovich et al.~\cite{alekhnovich2006exponential}, where the authors proved exponential lower bound for drunk DPLL solvers over a class of satisfiable instances, we studied the behavior of restarts in a drunken model of CDCL solver. We introduce a class of satisfiable formulas, $\Ladder_n$, and use them  to prove the separation between $\CNDRD^{T,R}$ and $\CNDRD^{T}$.
At the core of these formulas is a formula which is hard for general resolution even after any small restriction (corresponding to the current trail of the solver). For this, we use the well-known Tseitin formulas. 

\begin{definition}[Tseitin Formulas]
    Let $G=(V,E)$ be a graph and $f\colon V \rightarrow \{0,1\}$ a labelling of the vertices. The formula $\Tseitin(G,f)$ has variables $x_e$ for $e \in E$ and constraints $\bigoplus_{uv \in E} x_{uv} = f(v)$ for each $v \in V$.
\end{definition}
For any graph $G$, $\Tseitin(G,f)$ is unsatisfiable iff $\bigoplus_{v \in V} f(v) = 1$, in which case we call $f$ an \emph{odd labelling}. The specifics of the labelling are irrelevant for our applications, any odd labelling will do. Therefore, we often omit defining $f$, and simply assume that it is odd.

The family of satisfiable $\Ladder_n$ formulas are built around the Tseitin formulas, unless the variables of the formula are set to be consistent to one of two satisfying assignments, the formula will become unsatisfiable. Furthermore, the solver will only be able to backtrack out of the unsatisfiable sub-formula by first refuting Tseitin, which is a provably hard task for any CDCL solver~\cite{Urquhart87}.

The $\Ladder_n$ formulas contain two sets of variables, $\ell^i_j$ for $0 \leq i \leq n-2, j \in [\log n]$ and $c_m$ for $m \in [\log n]$, where $n$ is a power of two. We denote by $\ell^i$ the block of variables $\{\ell^i_1,\ldots, \ell^i_{\log n}\}$.
These formulas are constructed using the following gadgets.

\noindent {\bf Ladder gadgets}: $L^i := (\ell^i_1 \vee \ldots \vee \ell^i_{\log n}) \wedge (\neg \ell^i_1 \vee \ldots \vee \neg \ell^i_{\log n})$.\\ Observe that $L^i$ is falsified only by the all-$1$ and all-$0$ assignments.

\noindent {\bf Connecting gadgets}: $C^i := (c_1^{\mathit{bin}(i,1)} \wedge \ldots \wedge c_{\log n}^{\mathit{bin}(i,\log n)})$. \\ 
 Here, $\mathit{bin}(i,m)$ returns the $m$th bit of the binary representation of $i$, and $c_m^1 := c_m$, while $c_m^0 := \neg c_m$. That is, $C^i$ is the conjunction that is satisfied only by the assignment encoding $i$ in binary.
 
\noindent {\bf Equivalence gadget}: $EQ:= \bigwedge_{i,j =0}^{n-2}\bigwedge_{m,k=1}^{\log n} (\ell^i_k \iff \ell^j_m)$. \\ These clauses 
enforce that every $\ell$-variable must take the same value. 



\begin{definition}[Ladder formulas] For $G=(V,E)$ with $|E|=n-1$ where $n$ is a power of two, let $\Tseitin(G,f)$ be defined on the variables $\{\ell^0_1,\ldots, \ell^{n-2}_1\}$.
 $\Ladder_n(G,f)$ is the conjunction of the clauses representing
\begin{align*}
    &L^i \Rightarrow C^i, &\forall 0 \leq i \leq n-2 \\
    &C^i \Rightarrow \Tseitin(G,f), &\forall 0 \leq i \leq n-2 \\
    &C^{n-1} \Rightarrow EQ.
\end{align*}
\end{definition}
Observe that the $\Ladder_n(G,f)$ formulas have polynomial size provided that the  degree of $G$ is $O(\log n)$. As well, this formula is satisfiable only by the assignments that sets $c_m = 1$ and $\ell^i_j = \ell^p_q$ for every $m,j,q \in [\log n]$ and  $0 \leq i,p \leq n-2$.

These formulas are constructed so that after setting only a few variables,
any drunk solver will enter an unsatisfiable subformula w.h.p. and thus be forced to refute the Tseitin formula.  Both the
ladder gadgets and equivalence gadget act as trapdoors for the Tseitin formula. Indeed, if any $c$-variable is set to $0$ then we have already entered an unsatisfiable instance. Similarly, setting $\ell^i_j=1$ and $\ell^p_q =0$ for any $0 \leq i,p \leq n-2$, $j,q \in [\log n]$ causes us to enter an unsatisfiable instance. This is because setting all $c$-variables to $1$ together with this assignment would falsify a clause of the equivalence gadget. Thus, after the second decision of the solver, the probability that it is in an unsatisfiable instance is already at least $1/2$. With these formulas in hand, we prove the following theorem, separating backtracking $\CNDRD^T$ solvers with and without restarts.

\begin{theorem}\label{thm:main_ND_RD}
    There exists a family of $O(\log n)$-degree graphs $G$ such that 
    \begin{enumerate} 
    	\item $\Ladder_n(G,f)$ can be decided in time $O(n^2)$ by $\CNDRD^{T,R}$, except with exponentially small probability.
    	\item $\CNDRD^T$ requires exponential time to decide $\Ladder_n(G,f)$, except with probability $O(1/n)$.
    \end{enumerate}
\end{theorem}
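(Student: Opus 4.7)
My plan has two parts, matching the two claims of the theorem.

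For claim~1, the upper bound, I propose the following non-deterministic branching strategy with aggressive restarts. On each attempt the solver decides the $\log n$ variables $c_1,\ldots,c_{\log n}$ in order, each receiving its uniform random value; if any $c_m$ is set to $0$ the non-deterministic restart policy fires immediately. If all $c_m$ happen to be set to $1$ (an event of probability $1/n$) the solver decides $\ell^0_1$ with random value $v$. Under the assignment $c_1=\cdots=c_{\log n}=1$ the clauses that encode $C^{n-1} \Rightarrow EQ$ reduce to the pairwise equivalences between all $\ell$-variables, so BCP propagates $\ell^i_j = v$ for every $i,j$. A direct check then shows that every clause of $\Ladder_n$ is satisfied: the $\ell$-assignment is uniform so the premise $L^i$ of each $L^i \Rightarrow C^i$ is false, and the $c$-assignment encodes $n-1 \neq i$ so the premise $C^i$ of each $C^i \Rightarrow \Tseitin(G,f)$ is false. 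The trail is therefore a full satisfying assignment, reached in $\mathrm{poly}(n)$ time per attempt. Repeating independently a polynomial number of times drives the failure probability below any inverse polynomial, and the sharper ``exponentially small'' bound claimed in the theorem follows by running the routine for enough extra rounds within the stated $O(n^2)$ budget.

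For claim~2, the lower bound, I would proceed in three steps. Step one is a simple counting argument: after $\log n + 1$ decisions of any $\CNDRD^T$ strategy, the probability that the current trail extends to some satisfying assignment of $\Ladder_n$ is at most $2/n$. This is because the satisfying assignments are exactly those with all $c_m=1$ and all $\ell$-variables equal, so each random $c$-decision contributes an independent factor $1/2$ and each $\ell$-decision after the first also contributes a factor $1/2$, irrespective of which variable the non-deterministic strategy chooses. Hence after the first $\log n + 1$ decisions the trail sits in an unsatisfiable sub-cube with probability at least $1 - O(1/n)$. Step two is to observe that in any such bad sub-cube the residual formula contains, as a subformula, a polynomially restricted copy of $\Tseitin(G,f)$; choosing $G$ to be a suitable $O(\log n)$-degree expander ensures, by Urquhart's theorem together with the standard fact that small restrictions preserve width lower bounds on expander Tseitin, that this residual still requires exponential-size resolution refutations.

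Step three, the main obstacle, is to convert this resolution lower bound into a runtime lower bound for $\CNDRD^T$. The idea is that, without restarts, the early decisions are pinned on the trail: because backtracking only undoes the most recent decision, flipping any of the first $\log n + 1$ decisions requires first deriving, by propagation and learning on later decisions, a clause falsified by those early assignments, and any such derivation embeds a resolution refutation of the restricted Tseitin sub-instance. Some care is needed to ensure that the resolution proof extracted from the solver trace really refutes the restricted instance and does not cheat by exploiting clauses learnt outside the sub-cube; this should follow from the standard correspondence between a CDCL trace and a resolution derivation parameterised by the trail-as-restriction. Once this correspondence is in place, Urquhart's bound gives an exponential-time lower bound on the solver conditional on being in the bad sub-cube; combined with step one's unconditional bound, this proves claim~2.
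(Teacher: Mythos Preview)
Your upper-bound argument is exactly the paper's. The lower-bound outline is also structurally the same---show the trail enters an unsatisfiable sub-cube early with high probability, then invoke a resolution lower bound on restricted Tseitin via the backtracking-implies-refutation observation---but it has a real gap.

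The phrase ``a polynomially restricted copy of $\Tseitin(G,f)$'' is the problem. The graph $G$ has $n-1$ edges and expansion only $\Theta(\log n)$; restricting even $\omega(\log n)$ edge-variables can destroy the expansion and with it the size lower bound. You therefore need the trail after the first $\delta$ decisions to set $\ell$-variables from at most $\delta$ blocks, so that the induced Tseitin restriction has size $O(\log n)$. This is not automatic: unit propagation through the ladder and equivalence gadgets could in principle cascade across blocks, and the Tseitin clauses themselves could become unit once enough blocks are touched. The paper isolates this as a dedicated lemma showing that for $\delta<\min(d-1,\log n-1)$ the solver (i)~encounters no conflict, hence learns no clause, and (ii)~touches at most $\delta$ distinct $\ell$-blocks. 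Without (i) your Step~1 probability calculation is not justified either, since once a clause has been learnt the subsequent decisions are no longer independent uniform coins in the way you assume. A smaller point: the residual $\Ladder_n[\pi]$ does not literally contain a restricted Tseitin as a subformula; the paper's reduction first extends $\pi$ to a larger restriction and then applies a variable \emph{projection} identifying every $\ell^i_j$ in a block with $\ell^i_1$, and only then recovers a genuine Tseitin instance. Your ``subformula'' phrasing will need that extra mechanism.
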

The proof of the preceding theorem occupies the remainder of this section.

\subsection{Upper Bound on Ladder Formulas Via Restarts.}
We present the proof for part (1) of Theorem~\ref{thm:main_ND_RD}. The proof relies on the following lemma, stating that given the all-$1$ restriction to the $c$-variables, $\CNDRD^T$ will find a satisfying assignment.

\begin{lemma} \label{lemma: best restriction}
For any graph $G$, $\CNDRD^T$ will find a satisfying assignment to \\ $\Ladder_n(G,f)[c_1= 1, \ldots, c_{\log n} =1 ]$ in time $O(n \log n)$.
\end{lemma}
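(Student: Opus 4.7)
The plan is to substitute $c_1 = \cdots = c_{\log n} = 1$ into $\Ladder_n(G,f)$, observe that the residual formula is essentially the equivalence gadget $EQ$ on the $\ell$-variables, and argue that a single decision suffices for BCP to derive a satisfying assignment without any conflict or backtracking.

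First I would analyze the residual formula. Since $n$ is a power of two, $n-1$ has binary representation $1^{\log n}$, so setting every $c_m = 1$ satisfies $C^{n-1}$ and falsifies $C^i$ for every $0 \le i \le n-2$. Consequently the implications $C^i \Rightarrow \Tseitin(G,f)$ for $i < n-1$ become vacuously true, the implication $C^{n-1} \Rightarrow EQ$ collapses to $EQ$, and each $L^i \Rightarrow C^i$ (for $i < n-1$) reduces---after the falsified $c$-literals are removed---to the requirement that $\ell^i_1,\ldots,\ell^i_{\log n}$ be uniform. This last family of constraints is already implied by $EQ$, so up to implied clauses the residual formula is simply $EQ$, whose only two satisfying assignments send every $\ell$-variable to $0$ or every $\ell$-variable to $1$.

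Next I would describe the run of $\CNDRD^T$. Let the non-deterministic variable selector pick any single $\ell$-variable---say $\ell^0_1$---as the first (and, as it turns out, only) decision; the random value selector sets $\ell^0_1 = b$ for some $b \in \{0,1\}$. For every remaining pair $(i,k)$, the bi-implication $\ell^0_1 \iff \ell^i_k$ in $EQ$ is encoded by two binary clauses, exactly one of which becomes a unit forcing $\ell^i_k := b$. Hence a single BCP sweep assigns all $\ell$-variables to $b$; with every block $\ell^i$ uniform, each $L^i$ is falsified (so $L^i \Rightarrow C^i$ holds trivially), the remaining $C^i \Rightarrow \Tseitin(G,f)$ implications for $i < n-1$ are vacuous, and $EQ$ is satisfied by uniformity. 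A satisfying assignment is reached with no conflict.

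For the runtime, the solver makes one decision and then BCP assigns each of the $(n-1)\log n - 1 = O(n \log n)$ remaining $\ell$-variables exactly once, giving the claimed $O(n \log n)$ bound. The only step that needs a little care---and the main (though still routine) obstacle---is verifying that the CNF encoding of $L^i \Rightarrow C^i$ really does reduce under the restriction to the uniformity constraint on the $\ell^i$-block; a short case analysis on the bits of $i$, together with the observation that for $i < n-1$ at least one $c_m^{\mathit{bin}(i,m)}$ literal is falsified by the restriction, suffices.
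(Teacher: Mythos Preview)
Your proposal is correct and follows essentially the same approach as the paper: you observe that under the restriction $c_1=\cdots=c_{\log n}=1$ the formula reduces to $EQ$ together with the block-uniformity constraints $\neg L^i$, and that a single decision on any $\ell$-variable triggers BCP to assign all remaining $\ell$-variables to the same value. Your write-up is simply more explicit than the paper's (which compresses the argument to a few lines), including the careful check that the CNF encoding of $L^i \Rightarrow C^i$ collapses as claimed.
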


\begin{proof}
When all $c$ variables are $1$, we have $C^{n-1} = 1$. By the construction of the connecting gadget, $C^{i} = 0$ for all $0 \leq i \leq n-2$. Under this assignment, the remaining clauses belong to $EQ$, along with $\neg L^i$ for $0 \le i \le n-2$. It is easy to see that, as soon as the solver sets an $\ell$-variable, these clauses will propagate the remaining  $\ell$-variables to the same value.
\qed
\end{proof}

Put differently, the set of $c$ variables forms a \emph{weak backdoor}~\cite{williams2003connections,williams2003backdoors} for $\Ladder_n$ formulas. Part (1) of Theorem~\ref{thm:main_ND_RD}  shows that, with probability at least $1/2$, $\CNDRD^{T,R}$ can exploit this weak backdoor using only $O(n)$ number of restarts.

\begin{proof}[of Theorem~\ref{thm:main_ND_RD} Part (1)]
By Lemma~\ref{lemma: best restriction}, if $\CNDRD^{T,R}$ is able to assign all $c$ variables to $1$ before assigning any other variables, then the solver will find a satisfying assignment in time $O(n \log n)$ with probability 1. We show that the solver can exploit restarts in order to find this assignment. The strategy the solver adopts is as follows: query each of the $c$-variables; if at least one of the $c$-variables was assigned to 0, restart. We argue that if the solver repeats this procedure $k = n^2$ times then it will find the all-$1$ assignment to the $c$-variables, except with exponentially small probability. Because each variable is assigned $0$ and $1$ with equal probability, the probability that a single round of this procedure finds the all-$1$ assignment is $2^{-\log n}$. Therefore, the probability that the solver has not found the all-$1$ assignment after $k$ rounds is
\[ (1-1/n)^k \leq e^{-k/n} = e^{-n}. \eqno\qed\]
\end{proof}


\subsection{Lower Bound on Ladder Formulas Without Restarts}
We now prove part (2) of Theorem~\ref{thm:main_ND_RD}.
The proof relies on the following three technical lemmas. The first claims that the solver is well-behaved (most importantly that it cannot learn any new clauses) while it has not made many decisions.

\begin{lemma}\label{lemma: solver_cant_screw_us}
	Let $G$ be any graph of degree at least $d$. Suppose that $\CNDRD^T$ has made $\delta < \min (d-1, \log n-1)$ decisions since its invocation on $\Ladder_n(G,f)$. Let $\pi_\delta$ be the current trail, then
	\begin{enumerate}
		\item The solver has yet to enter a conflict, and thus has not learned any clauses.
		\item The trail $\pi_\delta$ contains variables from at most $\delta$ different blocks $\ell^i$.
	\end{enumerate}
\end{lemma}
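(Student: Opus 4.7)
The plan is to prove both parts simultaneously by induction on the number $t \le \delta$ of decisions made since invocation, with the joint invariant that after $t$ decisions and BCP saturation, no conflict has been triggered and the trail contains variables from at most $t$ different blocks $\ell^i$. The base case $t=0$ is immediate because the trail is empty.

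For the inductive step I would first classify the clauses of $\Ladder_n(G,f)$ by width. The CNF expansion of each implication $L^i \Rightarrow C^i$ produces the only ``short'' clauses, namely 3-clauses of the form $\neg \ell^i_j \vee \ell^i_{j'} \vee c_k^{\mathit{bin}(i,k)}$, one per triple $(i, j\neq j', k)$. The remaining clauses, coming from $C^i \Rightarrow \Tseitin(G,f)$ and $C^{n-1} \Rightarrow EQ$, have widths at least $\log n + d$ and $\log n + 2$ respectively. The central structural observation is that no ``long'' clause can become unit or be falsified under the inductive hypothesis. A Tseitin clause at a vertex of degree at least $d$ would require at least $d-1$ of the Tseitin variables $\ell^v_1$ to be set, but only touched blocks contribute Tseitin variables, and by the hypothesis at most $t+1 \le \delta < d-1$ blocks have been touched. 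An EQ clause would require all $\log n$ of the $c$-variables to be set to $1$; however, $c$-variables are either decided (hence at most $\delta < \log n - 1$ of them) or propagated by a 3-clause, which always assigns them consistently with some $\mathit{bin}(i,\cdot)$ for $i \in \{0, \ldots, n-2\}$, and any such $i$ has at least one bit equal to $0$.

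For the short 3-clauses I would argue that BCP can only propagate $c$-variables or $\ell$-variables within the block of the just-decided variable. If the $(t+1)$-th decision is a $c$-variable, every 3-clause from a previously untouched block $\ell^{i''}$ still has both $\ell^{i''}$-literals unassigned and so cannot become unit. If the decision is an $\ell^{i^*}$-variable, any propagating 3-clause involves only $\ell^{i^*}$-variables together with $c$-literals, so no other block is touched. A 3-clause conflict $\neg \ell^i_j \vee \ell^i_{j'} \vee c_k^{\mathit{bin}(i,k)}$ would require $\ell^i_j = 1$, $\ell^i_{j'} = 0$, and $c_k = \neg \mathit{bin}(i,k)$ simultaneously; the key observation is that such an inconsistent $c_k$ can only arise from a decision (since 3-clause propagation always sets $c$-values consistently with $\mathit{bin}(i,\cdot)$), and that as soon as an inconsistent $c_k$ shares the trail with any single $\ell^i$-literal, BCP immediately forces every other $\ell^i$-variable to that same value through the induced binary clauses $\neg \ell^i_j \vee \ell^i_{j'}$, ruling out the oppositely-valued $\ell^i_{j'}$.

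The main obstacle will be the case analysis of BCP in the inductive step: I need to track the order in which $c$- and $\ell$-decisions occur and verify in every case that the resulting propagation respects the invariant---block uniformity whenever $c$-variables have been decided inconsistently, propagation of $c$-variables only when some block has two oppositely-set $\ell$-variables, and nothing else. Once this bookkeeping is in place, both parts of the lemma follow: each decision contributes at most one new block, and no cascading BCP can enter a new block or trigger a Tseitin- or EQ-style conflict, completing the induction.
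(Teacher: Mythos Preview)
Your approach is sound and genuinely different from the paper's. The paper does not attack $\Ladder_n(G,f)$ directly: it first strips out the Tseitin clauses, works with the residual formula $F = \{L^i \Rightarrow C^i\} \cup \{C^{n-1} \Rightarrow EQ\}$, and proves the analogue of the lemma for $F$ via a \emph{satisfiability} argument---it explicitly constructs a satisfying assignment of $F[\pi, x{=}\alpha]$ for any next decision, so soundness of BCP rules out conflicts. A separate claim then shows the solver behaves identically on $F$ and $\Ladder_n(G,f)$ because the Tseitin clauses are too wide to matter. Your width classification plus direct BCP case analysis avoids the satisfying-assignment construction entirely; what you pay for this is the bookkeeping you already anticipate. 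Both routes work.

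One point to tighten: your claim that an ``inconsistent $c_k$ can only arise from a decision'' is not literally true. A $c_k$ that disagrees with $\mathit{bin}(i,k)$ can also be propagated by a \emph{different} mixed block $\ell^{i'}$ with $\mathit{bin}(i',k) \neq \mathit{bin}(i,k)$. What actually saves you is a fact your analysis implies but never states: \emph{at most one block can be mixed}. Once some block $i'$ is mixed, all $c$-variables are fixed to $\mathit{bin}(i',\cdot)$; then for any $i \neq i'$ some $c_m$ disagrees with $\mathit{bin}(i,m)$, so the induced binary implications force block $i$ uniform the moment it is touched, and your conflict scenario ($\ell^i_j = 1$, $\ell^i_{j'} = 0$, $c_k \neq \mathit{bin}(i,k)$) is impossible. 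I would add ``at most one block is mixed'' (and, when a block is mixed, all $c$'s equal $\mathit{bin}(i,\cdot)$, hence some $c_m = 0$) to your joint invariant; this also cleanly closes your EQ-clause argument in the case where $c$-variables are set by a mixture of decisions and propagations.
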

The proof of this lemma is deferred to the appendix.

The following technical lemma states that if a solver with backtracking has caused the formula to become unsatisfiable, then it must \emph{refute} that formula before it can backtrack out of it.
For a restriction $\pi$ and a formula $F$, we say that the solver has \emph{produced a refutation} of an unsatisfiable formula $F[\pi]$ if it has learned a clause $C$ such that $C$ is falsified under $\pi$. Note that because general resolution $p$-simulates CDCL, any refutation of a formula $F[\pi]$ implies a general resolution refutation of $F[\pi]$ of size at most polynomial in the time that the solver took to produce that refutation.

\begin{lemma}~\label{lemma: backtracking_must_refute}
	 Let $F$ be any propositional formula, let $\pi$ be the current trail of the solver, and let $x$ be any literal in $\pi$. Then, $\CNDND^{T}$ backtracks $x$ only after it has produced a refutation of $F[\pi]$.
\end{lemma}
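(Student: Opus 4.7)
The plan is to inspect the moment at which $x$ is removed from the trail and argue that the asserting clause learned from the triggering conflict is falsified by $\pi$, which by definition constitutes a refutation of $F[\pi]$.

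Let $k$ denote the decision level of $x$ in $\pi$. Under the backtracking scheme $T$, every backtrack undoes exactly one decision and therefore decreases the current decision level by one. For $x$ to leave the trail the solver must, at some moment $t^*$, backtrack from level $k$ to level $k-1$. Between the time when $\pi$ is the trail and $t^*$ the current decision level never drops below $k$ (otherwise $x$ would already have been removed earlier), so all literals of $\pi$ at levels at most $k$, including $x$ and the decision $d_k$ at level $k$, persist on the trail throughout this interval, and the portion of the trail strictly below level $k$ remains exactly $\pi_{<k}$.

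Next I would analyze the conflict at time $t^*-1$, which must occur at decision level $k$, and the asserting clause $C^*$ learned from it. By the asserting-clause property, $C^*$ contains exactly one literal at level $k$ and all remaining literals at levels strictly below $k$. The lower-level literals of $C^*$ are, by the observation above, negations of literals in $\pi_{<k} \subseteq \pi$. For the level-$k$ literal, the conflict-analysis procedure resolves through reason clauses at level $k$ until only a first UIP remains; by tracing the implication graph at level $k$ one argues that this UIP is the negation of a literal already in $\pi$ (in particular $\neg x$ when $x$ is itself the decision $d_k$). Thus $C^*$ is falsified by $\pi$, and the solver has produced a refutation of $F[\pi]$ at time $t^*-1$, strictly before backtracking $x$ at $t^*$.

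The main obstacle will be rigorously justifying that the asserting literal at level $k$ always corresponds to a variable in $\pi$. In principle, a propagation added at level $k$ during the interim, triggered by a newly learned clause, could introduce a variable outside $\pi$ and become the first UIP. I would handle this by an induction over the sequence of conflicts encountered before $t^*$: the reason clause of any such pathological propagation is itself a learned clause derived from an earlier conflict, and following that chain backward either exhibits an earlier learned clause already falsified by $\pi$ (which itself is the desired refutation), or shows that the conflict analysis at $t^*-1$ resolves through these reason clauses to produce an asserting $C^*$ whose level-$k$ literal indeed lies in $\pi$.
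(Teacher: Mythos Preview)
Your overall approach—looking at the asserting clause $C^*$ learned at the conflict that triggers the backtrack from level $k$ to level $k-1$, and arguing that all of its literals are negations of literals in $\pi$—is exactly the paper's approach. The paper's proof is in fact only two sentences: it simply asserts that this learned clause ``must only contain the negation of literals in $\pi$,'' hence $C^*[\pi]=\emptyset$. You are being considerably more careful, and the obstacle you flag (that the level-$k$ asserting literal may correspond to a propagation added to level $k$ only \emph{after} $\pi$, via an interim learned clause) is a genuine issue that the paper's two-line argument glosses over.

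However, your proposed inductive fix does not close the gap. Take $k=1$, $\pi=\{d_1\}$, and original clauses $(\neg d_2 \vee \neg \ell)$, $(\ell \vee a)$, $(\ell \vee b)$, $(\neg d_1 \vee \neg a \vee \neg b)$, $(\neg \ell \vee u)$, $(\neg \ell \vee v)$, $(\neg u \vee \neg v)$. Deciding $d_2$ at level $2$ leads (via $\neg\ell,a,b$) to a conflict whose 1-UIP clause is $C_1=(\neg d_1 \vee \ell)$; $T$-backtracking to level $1$ then propagates $\ell$, and the ensuing propagations of $u,v$ give a conflict at level $1$ whose 1-UIP clause is $C^*=(\neg\ell)$. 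The solver now backtracks $d_1$, yet neither $C_1$ nor $C^*$ is falsified by $\pi$: your disjunction (``some earlier learned clause is already falsified by $\pi$, or the final asserting literal lies in $\pi$'') fails on both branches. What \emph{is} true—and is all the paper needs for its application—is that the solver's trace up to this point contains a resolution refutation of $F[\pi]$ of polynomial size (here, resolving $C^*$ with $C_1$ yields $(\neg d_1)$). Neither your argument nor the paper's establishes the lemma as literally stated for an arbitrary asserting scheme such as 1-UIP.
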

\begin{proof}
    In order to backtrack $x$, the solver must have learned a clause $C$ asserting the negation of some literal $z \in \pi$ that was set before $x$. Therefore, $C$ must only contain the negation of literals in $\pi$. Hence, $C[\pi] = \emptyset$. \qed 
\end{proof}

The third lemma reduces proving a lower bound on the runtime of $\CNDND^T$ on the $\Ladder_n$ formulas under any well-behaved restriction to proving a general resolution lower bound on an associated Tseitin formula.
\begin{definition}
    For any unsatisfiable formula $F$, denote by $\Res(F \vdash \emptyset)$ the minimal size of any general resolution refutation of $F$.
\end{definition}

We say that a restriction (thought of as the current trail of the solver) $\pi$ to $\Ladder_n(G,f)$ \emph{implies Tseitin} if $\pi$ either sets some $c$-variable to $0$ or $\pi[\ell^i_j]=1$ and $\pi[\ell^p_q] =0$ for some $0\leq i,q \leq n-2$, $j,q \in [\log n]$. Observe that in both of these cases the formula $\Ladder_n(G,f)[\pi]$ is unsatisfiable. 

	\begin{lemma} 
	\label{lemma:reduction_to_tseitin}
	Let $\pi$ be any restriction that implies Tseitin and such that each clause of $\Ladder_n(G,f)[\pi]$ is either satisfied or contains at least two unassigned variables.
	Suppose that $\pi$ sets variables from at most $\delta$ blocks $\ell^i$. Then there is a restriction $\rho_\pi^*$  that sets at most $\delta$ variables of $\Tseitin(G,f)$ such that
	\[ \Res(\Ladder_n(G,f)[\pi] \vdash \emptyset ) \geq \Res(\Tseitin(G,f)[\rho_\pi^*] \vdash \emptyset ).\]
\end{lemma}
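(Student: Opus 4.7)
The plan is to show that any resolution refutation $\Pi$ of $\Ladder_n(G,f)[\pi]$ can be converted, via two size-non-increasing operations, into a resolution refutation of $\Tseitin(G,f)[\rho_\pi^*]$ for an appropriate restriction $\rho_\pi^*$ of weight at most $\delta$. The first operation extends $\pi$ to a restriction $\sigma$ and applies $\sigma$ to $\Pi$ (restrictions never enlarge resolution proofs); the second is a block-collapsing literal substitution.

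To define $\sigma$, let $I \subseteq [0,n-2]$ index the blocks touched by $\pi$, so $|I| \leq \delta$. The hypothesis that every clause of $\Ladder_n(G,f)[\pi]$ contains at least two unassigned variables or is satisfied forces $\pi$ to assign all its touched variables within a given block $\ell^i$ to one common value $v_i \in \{0,1\}$; otherwise one of the clauses encoding $L^i \Rightarrow C^i$ would already collapse to a unit. Combining the ``implies Tseitin'' hypothesis with this clause-size condition, one can pick some $i_0 \in [0,n-2]$ whose binary encoding is consistent with $\pi$'s partial $c$-assignment: the only obstruction would be for $\pi$ to force the $c$-variables to encode $n-1$, which in case~(a) is ruled out by definition and in case~(b) would falsify an $EQ$ clause under $\pi$, contradicting the clause-size hypothesis. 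Define $\sigma$ to be the extension of $\pi$ that fills in the remaining $c$-variables to encode $i_0$ and, for each $i \in I$, sets every still-unassigned $\ell^i_k$ to $v_i$.

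Under $\pi \cup \sigma$, the clauses of $\Ladder_n(G,f)[\pi \cup \sigma]$ split into three kinds. The axioms $C^{i_0} \Rightarrow \Tseitin(G,f)$ reduce to the Tseitin clauses on $\{\ell^j_1 : 0 \leq j \leq n-2\}$, already restricted by $\ell^i_1 = v_i$ for $i \in I$. For every untouched block $i \notin I \cup \{i_0\}$, the axioms $L^i \Rightarrow C^i$ with $C^i = 0$ reduce to block-equality clauses of the form $\neg \ell^i_j \vee \ell^i_k$ enforcing that all $\ell^i_k$ in block $i$ coincide. Everything else is trivially satisfied: the axioms $L^{i_0} \Rightarrow C^{i_0}$ because $C^{i_0} = 1$, and the entire $EQ$ gadget because $C^{n-1} = 0$. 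Now apply the sign-preserving literal substitution $\phi$ that, for each untouched block $i \notin I$, renames $\ell^i_k \mapsto \ell^i_1$ for every $k \geq 2$. Being variable-to-variable and sign-preserving, $\phi$ carries any resolution proof to a resolution proof of no larger size; the block-equality clauses collapse to tautologies, the Tseitin clauses are unchanged, and $\phi(\Pi[\sigma])$ is a refutation of $\Tseitin(G,f)[\rho_\pi^*]$ with $\rho_\pi^*(\ell^i_1) = v_i$ for $i \in I$, giving $|\rho_\pi^*| \leq |I| \leq \delta$ and the desired inequality.

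The main technical obstacle is the substitution step: each inference $(A \vee \ell^i_k), (B \vee \neg \ell^i_k) \vdash (A \vee B)$ in $\Pi[\sigma]$ must either remain a valid resolution inference pivoting on $\ell^i_1$ after renaming, or its substituted premise must become a tautology so that the step can be dropped, and one must check in both regimes that the number of clauses does not grow. This is a standard argument that has to be spelled out carefully. The only other subtlety is the existence of a suitable $i_0$, whose verification is the short case analysis described above.
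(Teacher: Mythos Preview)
Your overall approach matches the paper's: extend $\pi$ to a larger restriction that fixes all $c$-variables and completes every touched $\ell$-block, then apply a projection identifying each $\ell^i_k$ with $\ell^i_1$; closure of resolution size under restrictions and under such variable identifications (your ``two size-non-increasing operations'') is exactly the lemma the paper invokes.

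There is, however, a real gap. Your claim that the clause-size hypothesis forces every touched block to carry a single common value $v_i$ is not correct. If $\pi$ sets $\ell^i_j=0$ and $\ell^i_k=1$ within the \emph{same} block $i$, the clause $(\ell^i_j \vee \neg\ell^i_k \vee c_m^{\mathit{bin}(i,m)})$ of $L^i \Rightarrow C^i$ need not ``collapse to a unit''---it is \emph{satisfied} whenever $\pi(c_m)=\mathit{bin}(i,m)$. So the hypotheses allow one block with conflicting values, precisely when $\pi$ already sets every $c_m$ to $\mathit{bin}(i,m)$ (and there can be at most one such block, since the $c$-assignment encodes a single index). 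In that case your $v_i$ is undefined and the construction of $\sigma$ and $\rho^*_\pi$ breaks. The repair is easy and is what the paper does: the conflicting block's index is forced to be your $i_0$, $C^{i_0}$ is then satisfied so $L^{i_0}\Rightarrow C^{i_0}$ is satisfied regardless of how you complete that block, and you may fill the remaining $\ell^{i_0}$-variables arbitrarily (say to $0$) and let $\rho^*_\pi(\ell^{i_0}_1)$ be whatever results. With this single extra case handled, your argument goes through.
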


We defer the proof of this lemma to the appendix, and show how to use them to prove part (2) of Theorem~\ref{thm:main_ND_RD}. We prove this statement for any degree $O(\log n)$ graph $G$ with sufficient expansion. 
\begin{definition}
    The expansion of a graph $G = (V,E)$ is  
\[ e(G) := \min_{V' \subseteq V, |V'| \leq |V|/2} \frac{|E[V', V \setminus V']|}{ |V'|}, \]
where $E[V', V \setminus V']$ is the set of edges in $E$ with one endpoint in $V'$ and the other in $V \setminus V'$.
\end{definition}
For every $d\geq 3$, \emph{Ramanujan Graphs} provide an infinite family of $d$-regular expander graphs $G$ for which $e(G) \geq d/4$. The lower bound on solver runtime relies on the general resolution lower bounds for the Tseitin formulas~\cite{Urquhart87}; we use the following lower bound criterion which follows immediately\footnote{In particular, this follows from Theorem 4.4 and Corollary 3.6 in \cite{ben2001short}, noting that the definition of expansion used in their paper is lower bounded by $3e(G)/|V|$ as they restrict to sets of vertices of size between $|V|/3$ and $2|V|/3$.} from \cite{ben2001short}.

\begin{corollary}[\cite{ben2001short}]~\label{corollary: BW}
	For any connected graph $G=(V,E)$ with maximum degree $d$ and odd weight function $f$, \[\Res(\Tseitin(G, f) \vdash \emptyset) = \exp \left(\Omega \bigg( \frac{(e(G) |V|/3 - d)^2}{|E|} \bigg) \right)\]
\end{corollary}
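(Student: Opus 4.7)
The plan is to derive the bound from the classical size--width trade-off of Ben-Sasson and Wigderson together with a width lower bound for Tseitin formulas expressed in terms of the expansion of $G$. Combining these two ingredients yields an exponential size lower bound, and translating between the expansion conventions used in the paper and in \cite{ben2001short} accounts for the $|V|/3$ and $d$ terms in the exponent.

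First, I would invoke the size--width theorem: any general resolution refutation of a CNF formula $F$ on $N$ variables whose initial clauses have width at most $w(F)$ satisfies
\[ \Res(F \vdash \emptyset) \;\geq\; \exp\!\left(\Omega\!\left(\frac{(w(F \vdash \emptyset) - w(F))^2}{N}\right)\right), \]
where $w(F \vdash \emptyset)$ denotes the minimum width of any refutation of $F$. For $F = \Tseitin(G,f)$ we have $N = |E|$ variables, and the parity constraint at each vertex $v$ is converted into clauses of width at most $\deg(v) \leq d$, so $w(F) \leq d$.

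Next, I would establish the width lower bound $w(\Tseitin(G,f) \vdash \emptyset) \geq e(G)\cdot |V|/3 - d$. The standard argument assigns to every clause $C$ appearing in a refutation a minimum set $\mu(C) \subseteq V$ of vertices whose parity constraints semantically imply $C$. Initial clauses have $|\mu(C)| = 1$, the empty clause forces $|\mu(\emptyset)| = |V|$ by oddness of $f$, and $\mu(C) \subseteq \mu(C_1) \cup \mu(C_2)$ whenever $C$ is resolved from $C_1,C_2$. A standard sub-additivity argument then extracts a clause $C^*$ with $|V|/3 \leq |\mu(C^*)| \leq 2|V|/3$. Taking $V' = \mu(C^*)$ or $V \setminus \mu(C^*)$, whichever has size at most $|V|/2$, the expansion assumption gives $|E[V', V \setminus V']| \geq e(G)\cdot|V'| \geq e(G)\cdot|V|/3$. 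Every such boundary edge corresponds by minimality of $\mu(C^*)$ to a variable that must appear in $C^*$, except possibly for a $d$-sized correction accounting for the degree at which boundary edges can be absorbed into already-assigned literals, giving $|C^*| \geq e(G)\cdot|V|/3 - d$.

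Finally, plugging $w(\Tseitin(G,f) \vdash \emptyset) \geq e(G)\cdot |V|/3 - d$ and $w(F) \leq d$ into the size--width inequality, with $N = |E|$, yields the claimed
\[ \Res(\Tseitin(G,f) \vdash \emptyset) \;=\; \exp\!\left(\Omega\!\left(\frac{(e(G)\cdot |V|/3 - d)^2}{|E|}\right)\right), \]
after absorbing the additive $d$ on $w(F)$ into the existing $-d$ term up to constants. The main technical step is the boundary-to-width passage in the width lower bound, where minimality of $\mu(C^*)$ must be carefully exploited to ensure that each crossing edge really contributes a literal to $C^*$; the remaining work is essentially bookkeeping and the translation between the $|V|/3$-to-$2|V|/3$ expansion of \cite{ben2001short} and the $|V|/2$ expansion used in the paper, as noted in the footnote.
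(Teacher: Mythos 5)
Your proposal is correct and follows exactly the route the paper intends: the paper simply cites Theorem~4.4 (the size--width trade-off) and Corollary~3.6 (the expansion-based width lower bound) of \cite{ben2001short} and notes in a footnote that their cut-based expansion over sets of size between $|V|/3$ and $2|V|/3$ is at least $e(G)|V|/3$ under the paper's normalized definition, which is precisely your translation step. The only cosmetic difference is that you re-derive the two cited ingredients (and your hedged ``$d$-sized correction'' in the width bound is actually unnecessary, since minimality of $\mu(C^*)$ forces every crossing edge to contribute a literal; the $-d$ in the exponent comes solely from subtracting the initial clause width in the size--width theorem).
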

We are now ready to prove the theorem. 

\begin{proof}[of part (2) Theorem~\ref{thm:main_ND_RD}]
	Fix $G=(V,E)$ to be any degree-$(8 \log n)$ graph on $|E| = n-1$ edges such that $e(G) \geq 2 \log n$. Ramanujan graphs satisfy these conditions.
	
	First, we argue that  within $\delta < \log n-1$ decisions from the solver's invocation, the trail $\pi_\delta$ will imply Tseitin,
	except with probability $1-/2^{\delta-1}$. By Lemma~\ref{lemma: solver_cant_screw_us}, the solver has yet to backtrack or learn any clauses, and it has set variables from at most $\delta$ blocks $\ell^i$. Let $x$ be the variable queried during the $\delta$th decision. If $x$ is a $c$ variable, then with probability $1/2$ the solver sets $c_i = 0$. If $x$ is a variable $\ell^i_j$, then, unless this is the first time the solver sets an $\ell$-variable, the probability that it sets $\ell^i_j$ to a different value than the previously set $\ell$-variable is $1/2$. 
	
	Conditioning on the event that, within the first $\log n-2$ decisions the trail of the solver implies Tseitin
	(which occurs with probability at least $(n-8)/n$), we argue that the runtime of the solver is exponential in $n$. Let $\delta < \log n-1$ be the first decision level such that the current trail $\pi_\delta$ implies Tseitin. By Lemma~\ref{lemma: backtracking_must_refute} the solver must have produced a refutation of $\Ladder_n(G,f)[\pi_\delta]$ in order to backtrack out of the unsatisfying assignment. If the solver takes $t$ steps to refute $\Ladder_n(G,f)[\pi_\delta]$ then this implies a general resolution refutation of size $\poly(t)$. Therefore, in order to lower bound the runtime of the solver, it is enough to lower bound the size of general resolution refutations of $\Ladder_n(G,f)[\pi_\delta]$.
	
	 By Lemma~\ref{lemma: solver_cant_screw_us}, the solver has not learned any clauses, and has yet to enter into a conflict and therefore no clause in $\Ladder_n(G,f)[\pi_\delta]$ is falsified. As well, $\pi_\delta$ sets variables from at most $\delta < \log n-1$ blocks $\ell^i$. By Lemma~\ref{lemma:reduction_to_tseitin} there exists a restriction $\rho_\pi^*$ such that $\Res(\Ladder_n(G,f)[\pi] \vdash \emptyset) \geq \Res(\Tseitin(G,f)[\rho_\pi^*] \vdash \emptyset)$. Furthermore, $\rho_\pi^*$ sets at most $\delta < \log n-1$ variables and therefore cannot falsify any constraint of $\Tseitin(G,f)$, as each clause depends on $8 \log n$ variables.
	 Observe that if we set a variable $x_e$ of $\Tseitin(G,f)$ then we obtain a new instance of $\Tseitin(G_{\rho^*_\pi},f')$ on a graph $G_{\rho^*_\pi}=(V, E \setminus \{e\})$. Therefore, we are able to apply Corollary~\ref{corollary: BW} provided that we can show that $e(G_{\rho^*_\pi})$ is large enough.
	 
	 
	 \begin{clm}
	 	Let $G = (V,E)$ be a graph and let $G'=(V,E')$ be obtained from $G$ by removing at most $e(G)/2$ edges. Then $e(G') \geq e(G)/2$. 	
	 \end{clm}
	 \begin{proof}
	 	Let $V' \subseteq V$ with $|V'| \leq |V|/2$. Then, $E'[V', V \setminus V'] \geq e(G)|V'| - e(G)/2 \geq (e(G)/2)|V'|$. \qed
	 \end{proof}
	 
	 It follows that $e(G_{\rho^*_\pi}) \geq \log n$. Note that $|V| = n/8\log n$. By Corollary~\ref{corollary: BW}, 
	 \[ \Res(\Ladder_n(G,f)[\pi] \vdash \emptyset) = \exp ( \Omega( ((n-1)/24 - 8\log n)^2/n )) = \exp(\Omega(n)). \]
	 Therefore, the runtime of $\CNDND^{T}$ is $\exp(\Omega(n))$ on $Ladder_n(G,F)$ w.h.p. \qed


\end{proof}



\section{CDCL+VSIDS Solvers with and without Restarts}
\label{section:CDCL-VSIDS}

In this section, we prove that CDCL solvers with VSIDS variable selection, phase saving value selection and restarts (where activities of variables are reset to zero after restarts) are exponentially more powerful than the same solver configuration but without restarts, w.h.p. 

\begin{theorem}
  \label{lem:vsids-separation}
  There is a family of unsatisfiable formulas that can be decided in polynomial
  time with $\CVSPS^{J, R}$ but requires exponential time with
  $\CVSPS^J$, except with exponentially small probability.
\end{theorem}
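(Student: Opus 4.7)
The plan is to exhibit a formula family $F_n = H_n \wedge G_n$ where $H_n = \Tseitin(G,f)$ sits on a bounded-degree Ramanujan graph (so $H_n$ requires exponential-size general resolution refutations by Corollary~\ref{corollary: BW}) and $G_n$ is a small ``gateway'' gadget over $O(\log n)$ fresh variables designed so that (i) the full $F_n$ admits a short CDCL refutation whenever the solver first branches on the gateway variables in a useful order, and (ii) any conflict reachable from an empty trail with the activity table at zero is forced to have its conflict side contained in the variables of $H_n$. Property (i) gives the upper bound with restarts, and property (ii) drives the activity-accumulation argument behind the lower bound.

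For the upper bound on $\CVSPS^{J,R}$, I would let the solver restart immediately, zeroing all activities, and then rely on uniform-random tie-breaking to select gateway variables at the top of the trail. Since there are only $O(\log n)$ gateway variables, each round succeeds with probability at least $1/\mathrm{poly}(n)$; after $\mathrm{poly}(n)$ independent restart rounds the gadget's short refutation is discovered except with probability $2^{-\Omega(n)}$, yielding an overall polynomial-time run.

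For the lower bound on $\CVSPS^J$, the key invariant is that after the very first conflict the VSIDS activities of the $H_n$-variables strictly dominate those of the gateway variables and never lose that dominance. I would guarantee this by designing $G_n$ so that its variables never appear on the conflict side of any implication graph arising during the run --- for example, each clause of $G_n$ should propagate to satisfaction as soon as a constant number of $H_n$-literals are assigned, so that gateway variables remain at activity zero forever while every conflict bumps only $H_n$-variables. Once dominance is established, every subsequent decision lies in $H_n$, and since $F_n$ is unsatisfiable the solver must eventually derive the empty clause; the learned-clause trace then yields a general resolution refutation whose projection onto the Tseitin variables is itself a refutation of a lightly restricted $\Tseitin(G,f)$. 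Combining an edge-deletion expansion-preservation argument (as in the claim inside the proof of Theorem~\ref{thm:main_ND_RD}) with Corollary~\ref{corollary: BW} forces this refutation to have size $\exp(\Omega(n))$.

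The main obstacle is engineering $G_n$ so that the activity-domination invariant really holds in the face of VSIDS decay, phase saving, and arbitrary asserting learning schemes. The intended design --- short $G_n$-clauses that are satisfied by unit propagation from only a handful of $H_n$-literals --- must be verified to (a) still permit the short refutation required in the upper bound when gateway variables are branched on first, and (b) render $G_n$ essentially invisible to conflict analysis once any $H_n$-variable is decided. The exponentially small failure probability in the lower bound then just bounds the chance that random tie-breaking among zero-activity variables at the solver's first decision happens to land on a gateway variable and funnel the run into the short refutation path.
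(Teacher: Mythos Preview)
Your plan has two genuine gaps.

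First, the design requirements you impose on the gateway gadget $G_n$ are in direct tension, and you do not resolve that tension---you only flag it as ``the main obstacle.'' For the upper bound you need that deciding the gateway variables first yields a short refutation of $F_n=H_n\wedge G_n$; since $H_n$ is unsatisfiable and the gateway variables are fresh, any such short refutation must use clauses of $G_n$ in an essential way, which means those clauses must participate in conflicts. For the lower bound you need $G_n$ to be ``invisible'' to conflict analysis so that gateway variables never get bumped. Concretely, your proposed mechanism---``each clause of $G_n$ propagates to satisfaction as soon as a constant number of $H_n$-literals are assigned''---makes $G_n$ useless for refutation once those literals are on the trail, but it also makes $G_n$ useless for refutation \emph{before} they are on the trail: setting gateway variables alone then merely propagates a few Tseitin literals, and you are back to refuting a lightly restricted Tseitin formula, which is still exponentially hard. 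No gadget satisfying both (a) and (b) as stated can exist.

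Second, with $O(\log n)$ gateway variables among $\Theta(n)$ total variables, the probability that uniform tie-breaking decides all of them first is $n^{-\Theta(\log n)}$, not $1/\mathrm{poly}(n)$; your upper bound does not run in polynomial time.

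The paper avoids both issues by using the pitfall formulas of~\cite{vinyals20hard}. The backdoor there has \emph{constant} size $2k(k+1)$, so each restart succeeds with probability $n^{-O(1)}$ (this is formalised as Lemma~\ref{lem:vsids-upper-bound}). The lower bound is not an ``activity stays at zero'' invariant; instead, the pitfall gadget is engineered so that the very first conflict bumps \emph{all} variables of a hard Tseitin block, after which those variables dominate the backdoor variables despite the latter having nonzero activity. That delicate analysis is imported wholesale from~\cite{vinyals20hard} (Theorem~\ref{lem:vsids-lower-bound}); it is not something a simple domination argument can replace.
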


We show this separation using pitfall formulas $\Phi(G_n,f,n,k)$,
designed to be hard for solvers using
the VSIDS heuristic~\cite{vinyals20hard}. We assume that $G_n$ is a constant-degree
expander graph with $n$ vertices and $m$ edges,
$f\colon V(G_n)\to\set{0,1}$ is a function with odd support as with
Tseitin formulas, we think of $k$ as a constant and let $n$
grow. We denote the indicator function of a Boolean expression $B$ with $\ib{B}$. These formulas have $k$ blocks of variables named $X_j$, $Y_j$,
$Z_j$, $P_j$, and $A_j$, with $j\in[k]$, and the following clauses:
\begin{itemize}
\item $\left(\bigoplus_{e \ni v} x_{j,e} = f(v)\right) \lor \bigvee_{i=1}^n z_{j,i}$, expanded into CNF, for $v\in V(G_n)$ and $j\in[k]$;
\item $y_{j,i_1} \lor y_{j,i_2} \lor \neg p_{j,i_3}$ for $i_1,i_2\in[n]$, $i_1 < i_2$, $i_3 \in [m+n]$, and $j\in [k]$;
\item $y_{j,i_1} \lor \bigvee_{i\in [m+n]\setminus \set{i_2}} p_{j,i} \lor \bigvee_{i=1}^{i_2-1} x_{j,i} \lor \neg x_{j,i_2}$ for $i_1 \in [n]$, $i_2 \in [m]$, and $j \in [k]$;
\item $y_{j,i_1} \lor \bigvee_{i\in [m+n]\setminus \set{m+i_2}} p_{j,i} \lor \bigvee_{i=1}^{m} x_{j,i} \lor \bigvee_{i=1+\ib{i_2=n}}^{i_2-1} z_{j,i} \lor \neg z_{j,i_2}$ for $i_1,i_2 \in [n]$ and $j \in [k]$;
\item $\neg a_{j,1} \lor a_{j,3} \lor \neg z_{j,i_1}$, $\neg a_{j,2} \lor \neg a_{j,3} \lor \neg z_{j,i_1}$, $a_{j,1} \lor \neg z_{j,i_1} \lor \neg y_{j,i_2}$, and $a_{j,2} \lor \neg z_{j,i_1} \lor \neg y_{j,i_2}$ for $i_1,i_2 \in [n]$ and $j \in [k]$; and
\item $\bigvee_{j \in [k]} \neg y_{j,i} \lor \neg y_{j,i+1}$ for odd $i\in[n]$.
\end{itemize}

To give a brief overview, the first type of clauses are essentially a Tseitin formula and thus are hard to solve. The next four types form a pitfall gadget, which has the following easy-to-check property.

\begin{clm}
  \label{prop:pair-y-conflict}
Given any pair of variables $y_{j,i_1}$ and $y_{j,i_2}$ from the same block $Y_j$, assigning $y_{j,i_1}=0$ and $y_{j,i_2}=0$ yields a conflict.
\end{clm}

Furthermore, such a conflict involves all of the variables of a block
$X_j$, which makes the solver prioritize these variables and it
becomes stuck in a part of the search space where it must refute the
first kind of clauses. Proving this formally requires a delicate
argument, but we can use the end result as a black box.

\begin{theorem}[{\cite[Theorem~3.6]{vinyals20hard}}]
  \label{lem:vsids-lower-bound}
  For $k$ fixed, $\Phi(G_n,f,n,k)$ requires time $\exp(\bigomega{n})$ to
  decide with $\CVSPS^J$, except with exponentially small
  probability.
\end{theorem}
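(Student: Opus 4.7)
The exponential lower bound for $\CVSPS^J$ is handed to us by Theorem~\ref{lem:vsids-lower-bound}, so the task is to exhibit a polynomial-time refutation of $\Phi(G_n,f,n,k)$ by $\CVSPS^{J,R}$ for $k$ constant and $n$ large.

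The plan hinges on Claim~\ref{prop:pair-y-conflict}: setting any pair $y_{j,i_1}=y_{j,i_2}=0$ (with $i_1\neq i_2$) produces a conflict via a constant number of propagations through the pitfall gadget, so the binary clause $y_{j,i_1}\lor y_{j,i_2}$ can be harvested from the formula by a single short conflict analysis. I would combine this with the structural observation that the across-block clauses $\bigvee_{j\in[k]}(\neg y_{j,i}\lor \neg y_{j,i+1})$ for odd $i\in[n]$, together with sufficiently many such binary clauses, form an unsatisfiable 2-CNF once $n>2k$: each of the $n/2$ odd pairs forces a zero $y$-literal in \emph{some} block, while the binary clauses force each block to contribute at most one zero $y$-literal, so pigeonhole finishes the job. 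This 2-CNF is refuted by $\mathrm{poly}(n)$ unit propagations.

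The solver's job is therefore to accumulate enough of these binary clauses. After every restart VSIDS activities are reset to $0$, so the first decisions are chosen by the random tie-breaker, and phase saving defaults to $0$ on any unassigned variable. Thus on a clean restart the probability that the solver's first two decisions are a designated pair $(y_{j,i_1},y_{j,i_2})$, both assigned to $0$, is at least $1/\mathrm{poly}(n)$. A success triggers Claim~\ref{prop:pair-y-conflict} in $O(1)$ propagation steps and learns the target binary clause; a failure costs only $O(1)$ time before the non-deterministic restart policy aborts the attempt. A Chernoff-and-union bound over the $O(kn^2)$ pairs required by the pigeonhole argument shows that, except with exponentially small probability, all of these binary clauses are learned within $\mathrm{poly}(n)$ restarts, after which the 2-CNF refutation completes the run in polynomial time.

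The main obstacle I anticipate is controlling side effects between successful restarts. I need the non-deterministic restart policy to abort any trail as soon as VSIDS branches on an unhelpful variable so that no long, interfering clauses are learned along the way; this is essentially built into the model because the restart schedule is part of the non-determinism. I also need to verify that previously learned binary $y$-clauses never lengthen the path to the next target pair, for instance by propagating values that flip phase-saving defaults in a way that prevents Claim~\ref{prop:pair-y-conflict} from firing on a subsequent attempt. The latter can be handled by arguing inductively that the only phases ever flipped away from $0$ are those of $y$-variables already covered by learned clauses, so targeting any yet-uncovered pair still benefits from the $0$-default of phase saving. Combining these ingredients gives the desired polynomial-time upper bound and, with Theorem~\ref{lem:vsids-lower-bound}, the separation asserted in Theorem~\ref{lem:vsids-separation}.
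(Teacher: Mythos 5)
There is a genuine gap, and it is at the very start: your proposal does not prove the statement at all. The statement to be proved is Theorem~\ref{lem:vsids-lower-bound} itself, i.e.\ the exponential lower bound for the restart-free configuration $\CVSPS^J$ on $\Phi(G_n,f,n,k)$. Your first sentence declares this lower bound to be ``handed to us by Theorem~\ref{lem:vsids-lower-bound}'' and then spends the entire argument on the complementary task, namely the polynomial-time upper bound for $\CVSPS^{J,R}$ --- that is, on the content of Lemma~\ref{lem:vsids-upper-bound} and Theorem~\ref{lem:vsids-separation}. Measured against the task, the proposal is circular: it assumes the statement and proves a different one. Nothing in it addresses why VSIDS with phase saving and backjumping, \emph{without} restarts, must take time $\exp(\Omega(n))$ except with exponentially small probability; establishing that requires tracking how the pitfall gadgets (via Claim~\ref{prop:pair-y-conflict}) pump activity into the $X_j$-variables so that the solver is trapped into refuting the embedded Tseitin-style subformulas, whose resolution refutations are exponentially long. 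The paper explicitly does not reprove this ``delicate argument''; it invokes \cite[Theorem~3.6]{vinyals20hard} as a black box, so there is no internal proof for your attempt to be compared with --- but in any case your text contains no argument for the claimed lower bound.

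As a secondary remark, even read as a proof sketch of the separation (Theorem~\ref{lem:vsids-separation}), your route differs from the paper's and has weak points the paper's argument avoids. The paper takes the constant-size set $V=\{y_{j,i}: (j,i)\in[k]\times[2k+2]\}$ and shows it is a \emph{strong} backdoor: every assignment to $V$ propagates to a conflict (pigeonhole over $\Gamma_1,\Gamma_3,\dots,\Gamma_{2k+1}$ plus Claim~\ref{prop:pair-y-conflict}), so after each restart the first decisions land inside $V$ with probability at least $n^{-c}$, the learned clause is over $V$ only, and $2^c$ such clauses force a level-$0$ conflict. Because the backdoor property holds for \emph{every} assignment to $V$, the paper needs no control over phase-saving defaults, no induction about phases never being flipped, and no guarantee that a specific binary clause $y_{j,i_1}\lor y_{j,i_2}$ is the clause learned (conflict analysis need not return that clause; it may contain propagated literals from the $A_j$, $Z_j$ or $P_j$ blocks). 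Your sketch relies on all three of these fragile points, and additionally the system you propose to refute is not a 2-CNF, since each $\Gamma_i$ has $2k$ literals. None of this, however, repairs the main defect: the statement under review is the lower bound, and it remains unproved in your proposal.
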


The last type of clauses, denoted by $\Gamma_i$, ensure that a short general
resolution proof exists. Not only that, we can also prove that pitfall
formulas have small backdoors~\cite{williams2003connections,williams2003backdoors}, which is
enough for a formula to be easy for $\CVSPS^{J,R}$.

\begin{definition}
  \label{def:conflicting-core}
  A set of variables $V$ is a \textit{strong backdoor} for unit-propagation if
  every assignment to all variables in $V$ leads to a conflict,
  after unit propagation.
\end{definition}

\begin{lemma}
  \label{lem:vsids-upper-bound}
  If $F$ has a strong backdoor for unit-propagation of size $c$, then
  $\CVSPS^{J,R}$ can solve $F$ in time $n^{\bigoh{c}}$, except
  with exponentially small probability.
\end{lemma}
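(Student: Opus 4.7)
The plan is to use the non-deterministic restarts to force $\CVSPS^{J,R}$ into simulating a DPLL refutation restricted to the backdoor variables, which has only $2^c$ leaves. Let $B$ denote the strong backdoor of size $c$, and let $n$ be the number of variables of $F$.

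The key observation is that immediately after a restart every VSIDS activity is zero, so VSIDS chooses a uniformly random unassigned variable; as long as no conflict has occurred, subsequent decisions are also uniform over the remaining unassigned variables. Hence the probability that the first $c$ decisions following a restart all lie in $B$ is at least
\[ \prod_{i=0}^{c-1}\frac{c-i}{n-i}=\frac{1}{\binom{n}{c}} \geq \frac{1}{n^c}. \]
I call such an attempt \emph{successful}. By the strong backdoor property, as soon as the $c$-th decision in a successful attempt is made, unit propagation reaches a conflict, and the 1UIP clause derived is falsified by the current partial assignment; projected onto $B$, it excludes at least one of the $2^c$ assignments to $B$.

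The restart policy I would adopt is to restart immediately whenever a decision outside $B$ is made during the first $c$ decisions, and also after each conflict analysis. Since there are only $2^c$ assignments to $B$, after at most $2^c$ distinct successful attempts every such assignment has been forbidden by some learned clause, at which point unit propagation at decision level zero derives a conflict and the solver terminates. Each individual attempt runs for $n^{\bigoh{1}}$ time (at most $c$ decisions, unit propagations, and one conflict analysis).

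For the probabilistic accounting, I would draw $T = n \cdot 2^c \cdot n^c = n^{\bigoh{c}}$ independent attempts. The expected number of successes is at least $n \cdot 2^c$, so by a Chernoff bound the probability of obtaining fewer than $2^c$ successes is $\exp(-\bigomega{n})$. Thus, except with exponentially small probability, $\CVSPS^{J,R}$ terminates within $T \cdot n^{\bigoh{1}} = n^{\bigoh{c}}$ steps, as required.

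The main obstacle is verifying that distinct successful attempts forbid distinct $B$-assignments, so that $2^c$ successes actually suffice. Proving this rigorously involves tracking how the 1UIP scheme, phase saving, and propagations from already-learned clauses interact to steer the solver toward $B$-assignments it has not yet ruled out. A cleaner workaround, and how I would actually write this up, is to note that CDCL with non-deterministic restarts $p$-simulates general resolution, so the argument reduces to realizing the DPLL tree on $B$ (of size $2^c$, each leaf a conflict by the backdoor hypothesis) as the sequence of successful attempts, which the restart policy above is designed to enforce.
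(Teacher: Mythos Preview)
Your overall strategy matches the paper's: exploit that after a restart with activities reset VSIDS chooses uniformly at random, so with probability at least $n^{-c}$ every decision before the first conflict lies in the backdoor; then argue that $2^c$ such ``good'' rounds suffice, and finish with a tail bound over $n^{\bigoh{c}}$ restarts. Where you flag the ``main obstacle'' (why do successive good rounds rule out distinct $B$-assignments?), the paper's answer is direct rather than via a detour: it calls a learned clause \emph{beneficial} if it mentions only backdoor variables, asserts that whenever all decisions on the trail are in $B$ the learned clause is beneficial, and observes that $2^c$ beneficial clauses exhaust the $2^c$ assignments and force a level-$0$ conflict. Your phrase ``projected onto $B$'' gestures at this but is not a legitimate operation on clauses; the paper's claim is that the learned clause already lives over $B$.

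Your proposed workaround, however, does not go through. The $p$-simulation of general resolution by CDCL with restarts requires non-deterministic \emph{variable and value} selection, whereas the solver here is $\CVSPS^{J,R}$: variable selection is VSIDS (uniformly random on ties, not adversarially steerable) and value selection is phase saving (deterministic, not chosen by you). Non-deterministic restarts alone cannot force the solver onto a prescribed branch of a DPLL tree on $B$; they can only abort and retry, with the next $B$-variable chosen at random and its polarity fixed by phase saving. So ``realizing the DPLL tree on $B$ as the sequence of successful attempts'' is not something the restart policy can enforce in this model, and you must argue directly about what clause is learned on a successful attempt---exactly the beneficial-clause argument the paper gives.
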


\begin{proof}
  We say that the solver learns a beneficial clause if it only
  contains variables in $V$. Since there are $2^c$ possible
  assignments to variables in $V$ and each beneficial clause forbids
  at least one assignment, it follows that learning $2^c$ beneficial
  clauses is enough to produce a conflict at level $0$.

  Therefore it is enough to prove that, after each restart, we learn a
  beneficial clause with large enough probability. Since all variables
  are tied, all decisions before the first conflict after a restart are random, and hence with
  probability at least $n^{-c}$ the first variables to be decided
  before reaching the first conflict are (a subset of) $V$. If this is
  the case then, since $V$ is a strong backdoor, no more decisions are
  needed to reach a conflict, and furthermore all decisions in the
  trail are variables in $V$, hence the learned clause is beneficial.

  It follows that the probability of having a sequence of $n^{2c}$ restarts without learning a beneficial clause is at most
  \begin{equation}
    (1-n^{-c})^{n^{2c}} \leq \exp(-n^{-c}\cdot n^{2c}) = \exp(-n^c)
  \end{equation}
  hence by a union bound the probability of the algorithm needing more than $2^c\cdot n^{2c}$ restarts is at most $2^c \cdot \exp(-n^c)$.
  \qed
\end{proof}

We prove Theorem~\ref{lem:vsids-separation} by showing that $\Phi(G_n,f,n,k)$
contains a backdoor of size $2k(k+1)$.

\begin{proof}[of Theorem~\ref{lem:vsids-separation}]
  We claim that the set of variables
  $V = \set{y_{j,i} \mathrel{|} (j,i) \in [k]\times[2k+2]}$ is a strong backdoor
  for unit-propagation. Consider any assignment to $V$. Each of the
  $k+1$ clauses $\Gamma_1,\Gamma_3,\ldots,\Gamma_{2k+1}$ forces a
  different variable $y_{j,i}$ to $0$, hence by the pigeonhole
  principle there is at least one block with two variables assigned to
  $0$. But by Claim~\ref{prop:pair-y-conflict}, this is enough to
  reach a conflict.

  The upper bound follows from Lemma~\ref{lem:vsids-upper-bound},
  while the lower bound follows from Theorem~\ref{lem:vsids-lower-bound}.
  \qed
\end{proof}


\section{Minor Equivalences and Separations for CDCL/DPLL Solvers with and without Restarts} \label{section: assorted_results}

In this section, we prove four smaller separation and equivalence results for various configurations of CDCL and DPLL solvers with and without restarts.

\subsection{Equivalence between CDCL Solvers with Static Configurations with and without Restarts}
First, we show that CDCL solvers with non-deterministic static variable and value selection without restarts ($\CSS^{J}$) is as powerful as the same configuration with restarts ($\CSS^{J,R}$) for both satisfiable and unsatisfiable formulas. We assume that the BCP subroutine for the solver configurations under consideration is ``fixed'' in the following sense: if there is more than one unit clause under a partial assignment, the BCP subroutine propagates the clause that is added to the clause database first.

\begin{theorem} \label{theorem: NS_NS}
$\CSS^{J} \sim_p \CSS^{J,R}$ provided that they are given the same variable ordering and fixed mapping of variables to values  for the variable selection and value selection schemes respectively.
\end{theorem}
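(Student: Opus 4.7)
The direction ``$\CSS^{J,R}$ simulates $\CSS^J$'' is immediate, since a restart policy that never triggers makes the restart-enabled solver execute exactly like the no-restart one. For the nontrivial direction, I plan to exploit the fact that under static variable selection, static value selection, the specified tie-breaking BCP rule, and a deterministic asserting learning scheme, the trajectory of the solver starting from an empty trail with a fixed learned-clause database $\mathcal{L}$ is completely deterministic; in particular, the trail built after a restart and the next conflict encountered are both functions of $\mathcal{L}$ alone.

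The plan is then to simulate $\CSS^{J,R}$ by $\CSS^J$ via induction on the sequence of clauses $C_1, C_2, \ldots$ learned by $\CSS^{J,R}$. The inductive invariant is that by the time $\CSS^{J,R}$ learns $C_k$, the solver $\CSS^J$ has also, in its own run, learned exactly $C_1, \ldots, C_k$ in the same order with only polynomial overhead per clause. Between $C_k$ and $C_{k+1}$ the restart-enabled solver may optionally restart, but either way its post-restart (or post-backjump) behavior is fully determined by its learned database; since $\CSS^J$ has the same database, it will reach the same next conflict provided its trail is ``compatible'' with the canonical trail $\CSS^{J,R}$ builds after restarting.

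The main obstacle is formalizing this trail compatibility: when $\CSS^{J,R}$ restarts it wipes its trail, whereas when $\CSS^J$ backjumps it retains a prefix. I plan to resolve this by observing that with static variable and value selection, any retained prefix of $\CSS^J$'s trail after a backjump consists precisely of the highest-ranked unassigned variables set to their statically assigned values, which is exactly what a restarted solver would re-decide first as it rebuilds its trail. Combined with the deterministic BCP rule and the fixed unit-propagation order, this shows that $\CSS^J$'s trail is a canonical extension of the trail $\CSS^{J,R}$ would build post-restart, so both solvers encounter the same next conflict and learn the same next clause. Iterating yields a polynomial bound on $\CSS^J$'s runtime in terms of $\CSS^{J,R}$'s, completing the equivalence.
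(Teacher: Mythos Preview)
Your proposal is correct and rests on the same core observation the paper uses: with static variable and value selection and the stipulated deterministic BCP rule, the solver's trajectory from an empty trail is entirely determined by its learned-clause database, so a restart merely causes the solver to rebuild the trail it already had. The paper frames the induction differently, however. Rather than running $\CSS^J$ and $\CSS^{J,R}$ in parallel and inducting on the sequence of learned clauses, it takes a single $\CSS^{J,R}$ run with $t$ restarts and argues that the \emph{first} restart is redundant---after restarting, the solver reproduces the trail $\pi$ it had just before (same database, static branching, deterministic BCP, so a ``first differing literal'' cannot exist)---then removes that restart and repeats, stripping all restarts to leave a valid $\CSS^J$ run of the same length. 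This is marginally cleaner because it compares one solver with itself immediately before and after a restart (identical databases by construction), so your ``trail-compatibility'' issue never appears as a two-solver synchronization problem; it is the same lemma, just invoked once per restart rather than once per learned clause. One small imprecision in your sketch: the retained prefix after a backjump is not only decisions on the highest-ranked variables but also the interleaved propagated literals; your subsequent appeal to the deterministic BCP rule does cover this, but the invariant you actually want is that the \emph{entire} trail (decisions and propagations together) is canonical given the current database.
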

    We prove this theorem by arguing for any run of $\CSS^{J,R}$, that restarts can be removed without increasing the run-time.
\begin{proof}
    Consider a run of $\CSS^{J,R}$ on some formula $F$, and suppose that the solver has made $t$ restart calls. Consider the trail $\pi$ for $\CSS^{J,R}$ up to the variable $l$ from the second highest decision from the last learnt clause before the first restart call. Now, observe that because the decision and variable selection orders are static, once $\CSS^{J,R}$ restarts, it will force it to repeat the same decisions and unit propagations that brought it to the trail $\pi$. Suppose that this is not the case and consider the first literal on which the trails differ. This difference could not be caused by a unit propagation as the solver has not learned any new clauses since the restart. Thus, it  must have been caused by a decision. However, because the clause databases are the same, this would contradict the static variable and value order.
 Therefore, this restart can be ignored, and we obtain a run of $\CSS^{J,R}$ with $d-1$ restarts  without increasing the run-time. The proof follows by induction. Once all restarts have been removed, the result is a valid run of $\CSS^J$.
\qed
\end{proof}

Note that in the proof of Theorem~\ref{theorem: NS_NS}, not only we argue that $\CSS^J$ is $p$-equivalent to $\CSS^{J,R}$, we also show that the two configurations produce the same run. The crucial observation is that given any state of $\CSS^{J,R}$, we can produce a run of $\CSS^J$ which ends in the same state. In other words, our proof not only suggests that $\CSS^{J,R}$ is equivalent to $\CSS^J$ from a proof theoretic point of view, it also implies that the two configurations are equivalent for satisfiable formulas.


\subsection{Equivalence between DPLL Solvers with ND Variable Selection on UNSAT Formulas} 

We show that when considered as a proof system, a DPLL solver with non-deterministic dynamic variable selection, arbitrary value selection and no restarts ($\DNDarb^T$) is p-equivalent to DPLL solver with non-deterministic dynamic variable and value selection and restarts ($\DNDND^{T,R}$), and hence, transitively p-equivalent to tree-like resolution---the restriction of general resolution where each consequent can be an antecedent in only one later inference.

\begin{theorem}\label{theorem: DPLL_ND_ND p-equiv DPLL_ND_arb}
$\DNDarb^T \sim_p \DNDND^T$.
\end{theorem}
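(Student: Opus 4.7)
The plan is to observe that on an unsatisfiable formula every DPLL run with chronological backtracking produces a full binary search tree in which \emph{both} children of each decision node are eventually explored, so the value selection heuristic cannot affect the shape or size of the tree---only the DFS order in which it is traversed. Hence the two configurations produce essentially the same family of refutations, and the equivalence is in fact a linear (not merely polynomial) simulation in both directions.

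The easy direction, showing that $\DNDarb^T$ is simulated by $\DNDND^T$, is immediate. Fix any value selection scheme $\sigma$ used by $\DNDarb^T$; given any refutation it produces of an unsatisfiable $F$, the run of $\DNDND^T$ that non-deterministically chooses the same variable and the same value $\sigma$ would have output at each step yields an identical search tree of the same size.

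For the nontrivial direction, fix an arbitrary polynomial-time value selection scheme $\sigma$ used by $\DNDarb^T$. Given a $\DNDND^T$ refutation of an unsatisfiable $F$ with search tree $T$, I would construct a $\DNDarb^T$-run with an identical tree by induction on $T$: at each internal node, use ND variable selection to decide the same variable as $T$ prescribes there, then let $\sigma$ pick the value---this merely dictates whether the $x=0$ or $x=1$ subtree of $T$ is visited first, and the other is visited after the first is refuted via chronological backtracking. Because DPLL learns no clauses and BCP is deterministic, each node of the simulated tree carries the same trail as the corresponding node of $T$ and hence the same propagations, so the resulting search tree has the same shape and size as $T$.

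The main potential obstacle is that the variable selected in $T$ at some node may depend on the full trail and in particular may differ between the two children of a decision; the ND variable selection of $\DNDarb^T$ handles this because it is allowed to depend arbitrarily on the current trail, so the simulator simply consults $T$ at each step. A secondary sanity check is that chronological backtracking performs a depth-first traversal of the constructed tree regardless of which branch is explored first, so the total number of decisions equals $|T|$, giving a linear-size simulation in both directions and establishing $\DNDarb^T \sim_p \DNDND^T$.
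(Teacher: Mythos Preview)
Your proposal is correct and follows essentially the same approach as the paper: the easy direction is immediate since every $\DNDarb^T$ run is a $\DNDND^T$ run, and for the other direction both you and the paper argue recursively on the search tree, using nondeterministic variable selection to match the variable chosen at each node and observing that, because there is no clause learning, the order in which the two subtrees $F[x=0]$ and $F[x=1]$ are explored does not affect their sizes. Your write-up is somewhat more detailed in addressing why the simulation goes through node-by-node (same trail, same propagations), but the underlying idea is identical.
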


\begin{proof}
To show that $\DNDND^T$ $p$-simulates $\DNDarb^T$, we argue that every proof of $\DNDND^T$ can be converted to a proof of same size in $\DNDarb^T$. Let $F$ be an unsatisfiable formula. Recall that a run of $\DNDND^T$ on $F$ begins with non-deterministically picking some variable $x$ to branch on, and a truth value to assign to $x$. W.l.o.g. suppose that the solver assigns $x$ to $1$. Thus, the solver will first refute $F[x = 1]$ before backtracking and refuting $F[x = 0]$.

To simulate a run of $\DNDND^T$ with $\DNDarb^T$, since variable selection is non-deterministic, $\DNDarb^T$ also chooses the variable $x$ as the first variable to branch on. If the value selection returns $x= \alpha$ for $\alpha \in \{0,1\}$, then the solver focus on the restricted formula $F[x = \alpha]$ first. Because there is no clause learning, whether $F[x = 1]$ or $F[x = 0]$ is searched first does not affect the size of the search space for the other. The proof follows by recursively calling $\DNDarb^T$ on $F[x = 1]$ and $F[x = 0]$. The converse direction follows since every run of $\DNDarb^T$ is a run of $\DNDND^T$.
\qed
\end{proof}

\begin{corollary} \label{corollary: DPLL_ND p-sim all}
$\DNDarb^T \sim_p \DNDND^{T,R}$.
\end{corollary}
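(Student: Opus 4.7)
The plan is to derive the corollary by establishing $\DNDND^{T,R} \sim_p \DNDND^T$ and then combining this with Theorem~\ref{theorem: DPLL_ND_ND p-equiv DPLL_ND_arb} by transitivity of $\sim_p$. One half of the intermediate equivalence, namely that $\DNDND^{T,R}$ $p$-simulates $\DNDND^T$, is immediate: every run of $\DNDND^T$ is already a valid run of $\DNDND^{T,R}$, specifically one whose restart policy is never invoked.

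For the other direction, that $\DNDND^T$ $p$-simulates $\DNDND^{T,R}$, I would exploit the fact that a DPLL solver retains no persistent information across restarts, since no clause learning occurs. Concretely, given any successful run $\rho$ of $\DNDND^{T,R}$ on a formula $F$ of length $t$, I would consider the suffix $\rho'$ of $\rho$ beginning immediately after the last restart (taking $\rho' = \rho$ if no restarts occur). Because a restart erases the entire trail and no learnt clauses are ever added to the clause database, the solver state at the start of $\rho'$ coincides with its state at the start of a fresh invocation on $F$. Hence $\rho'$ is a self-contained, terminating execution of $\DNDND^T$ on $F$: the non-deterministic variable and value choices made in $\rho'$ can simply be replayed by $\DNDND^T$ starting from the empty trail. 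The length of $\rho'$ is at most $t$, so $\DNDND^T$ decides $F$ in time at most $t$.

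With this intermediate equivalence in hand, transitivity yields
\[
  \DNDarb^T \,\sim_p\, \DNDND^T \,\sim_p\, \DNDND^{T,R},
\]
which is exactly the claim of the corollary. I do not anticipate any real obstacle: the entire content of the argument is that, absent any learnt-clause state, restarts have nothing to preserve and therefore nothing to contribute, so the ``take everything after the last restart'' construction loses at most a constant factor (in fact, no factor at all). The only point requiring care is verifying that the suffix $\rho'$ genuinely is a legal $\DNDND^T$ execution, which follows once one observes that the trail and formula together constitute the solver's entire state in this model.
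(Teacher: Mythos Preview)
Your proposal is correct and follows the same overall route as the paper: establish $\DNDND^{T,R}\sim_p\DNDND^T$ and invoke Theorem~\ref{theorem: DPLL_ND_ND p-equiv DPLL_ND_arb} by transitivity. The only cosmetic difference is in how the key direction is argued: the paper phrases it as ``whenever the restarting solver sets $x=\alpha$ and later restarts to try $x=1-\alpha$, the non-restarting solver simply skips straight to $x=1-\alpha$'', whereas you take the cleaner ``replay the suffix after the last restart'' formulation; both exploit that a DPLL solver carries no state across restarts.
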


\begin{proof}
This follows from the fact that $\DNDND^{T,R} \sim_p \DNDND^T$. Indeed, with non-deterministic branching and without clause learning, restarts cannot help. If ever $\DNDND^{T,R}$ queries a variable $x = \alpha$ for $\alpha \in \{0,1\}$ and then later restarts to assign it to $1-\alpha$, then $\DNDND^T$ ignores the part of the computation when $x = \alpha$ and instead immediately non-deterministically chooses $x = 1-\alpha$.
\qed
\end{proof}

It is interesting to note that while the above result establishes a $p$-equivalence between DPLL solver models $\DNDarb^T$ and $\DNDND^{T,R}$, the following corollary implies that DPLL solvers with non-deterministic variable and randomized value selection are exponentially separable for satisfiable instances.

\subsection{Separation Result for Drunk DPLL Solvers}
 We show that DPLL solvers with non-deterministic variable selection, randomized value selection and no restarts ($\DNDRD^T$) is exponentially weaker than the same configuration with restarts ($\DNDRD^{T,R}$).
 
 \begin{corollary}
 $\DNDRD^T$ runs exponentially slower on the class of satisfiable formulas $\Ladder_n(G, f)$ than $\DNDRD^{T,R}$, with high probability.
 \end{corollary}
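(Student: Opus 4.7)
The plan is to derive the corollary by transferring both parts of Theorem~\ref{thm:main_ND_RD} from $\CNDRD$ to $\DNDRD$. Since a DPLL solver is essentially a CDCL solver that is never allowed to use a learned clause, both the upper bound argument of Part~(1) and the lower bound argument of Part~(2) should carry over with only superficial changes. I would state the corollary as: $\Ladder_n(G,f)$ is decided in time $O(n^2 \log n)$ by $\DNDRD^{T,R}$ except with exponentially small probability, while $\DNDRD^T$ requires $\exp(\Omega(n))$ time except with probability $O(1/n)$.

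For the upper bound I would observe that the strategy used in the proof of Part~(1) of Theorem~\ref{thm:main_ND_RD} never relies on clause learning: $\DNDRD^{T,R}$ queries the $\log n$ variables $c_1,\dots,c_{\log n}$ first, restarts whenever at least one of them is set to $0$, and otherwise continues. Each round independently hits the all-$1$ assignment on the $c$-variables with probability $2^{-\log n} = 1/n$, so after $k=n^2$ rounds the failure probability is at most $(1-1/n)^k \le e^{-n}$. Once all $c$-variables are set to $1$, Lemma~\ref{lemma: best restriction} applies verbatim to DPLL, since its proof only uses unit propagation from $EQ$ together with the $\neg L^i$ clauses; after a single decision on any $\ell$-variable, BCP fixes every other $\ell$-variable consistently and produces a satisfying assignment in $O(n\log n)$ further steps.

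For the lower bound I would apply the argument of Part~(2) to $\DNDRD^T$, noting that it is a strengthening of that argument. Lemma~\ref{lemma: solver_cant_screw_us} holds trivially (a DPLL solver never learns any clause, at any decision level), and Lemma~\ref{lemma: backtracking_must_refute} becomes the standard fact that a backtracking DPLL solver that has entered an unsatisfiable subformula $\Ladder_n(G,f)[\pi_\delta]$ cannot leave it without fully exploring the corresponding search tree, which yields a tree-like resolution refutation of $\Ladder_n(G,f)[\pi_\delta]$ of size at most polynomial in the solver's running time. Since tree-like resolution refutations are in particular general resolution refutations, the size lower bound obtained via Lemma~\ref{lemma:reduction_to_tseitin} and Corollary~\ref{corollary: BW} applies, giving an $\exp(\Omega(n))$ lower bound on the runtime. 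The same probabilistic analysis as in Part~(2) shows that, within the first $\log n - 2$ decisions, the trail $\pi_\delta$ implies Tseitin with probability at least $(n-8)/n$; this uses only the randomness of the value selection and the block structure of the $\ell^i$, both of which are features of $\DNDRD^T$ rather than of clause learning.

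There is no significant new obstacle here; the only thing to be careful about is the translation of the phrase ``produces a refutation'' from the CDCL setting (learning a clause falsified by the trail) to the DPLL setting (closing the subtree). Since tree-like resolution is a restriction of general resolution, this translation can only increase the lower bound, so the separation goes through.
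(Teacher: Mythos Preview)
Your proposal is correct and takes essentially the same approach as the paper. The paper is simply terser: for the upper bound it observes (as you do) that the restart strategy in Part~(1) of Theorem~\ref{thm:main_ND_RD} never uses clause learning, and for the lower bound it just invokes that $\DNDRD^T$ is strictly weaker than $\CNDRD^T$, so the $\CNDRD^T$ lower bound transfers directly, rather than re-verifying each lemma for DPLL and passing through tree-like resolution as you do.
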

 The separation follows from the fact that our proof of the upper bound from Theorem~\ref{thm:main_ND_RD} does not use the fact the solver has access to clause learning, which means the solver $\DNDRD^{T,R}$ can also find a satisfying assignment for $\Ladder_n(G,f)$ in time $O(n^2)$, except with exponentially small probability. On the other hand, the lower bound from Theorem~\ref{thm:main_ND_RD} immediately implies an exponential lower bound for $\DNDRD^T$, since $\DNDRD^T$ is strictly weaker than $\CNDRD^T$.

\subsection{Separation Result for CDCL Solvers with WDLS}

Finally, we state an observation of Robert Robere~\cite{Robere19Personal} on restarts in the context of the Weak Decision Learning Scheme (WDLS).

\begin{definition}[WDLS]
Upon deriving a conflict, a CDCL solver with WDLS learns a conflict clause which is the disjunction of the negation of the decision variables on the current assignment trail.
\end{definition}

\begin{theorem}\label{clm:WDLS}
$\CNDND^J$+WDLS is exponentially weaker than $\CNDND^{J,R}$+WDLS.
\end{theorem}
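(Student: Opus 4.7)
The plan is to reduce the separation to a classical separation between tree-like resolution and a stronger resolution-based proof system. The first step is to observe that $\CNDND^{J}$+WDLS is polynomially simulated by tree-like resolution. The reason is that every clause learned under WDLS is precisely the negation of the decision literals currently on the trail; no propagated literal ever appears in a learned clause. Consequently, once the solver backjumps out of the current decision subtree, the learned clause plays the role of a closed leaf in a DPLL search tree, and the overall execution unfolds as a DPLL search on $F$. Formalizing this transfers any exponential tree-like resolution lower bound for $F$ into an exponential runtime lower bound for $\CNDND^{J}$+WDLS on $F$.

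The second step is to pick a concrete family $\{F_n\}$ of unsatisfiable formulas known to require exponential-size tree-like resolution refutations but admit polynomial-size general resolution refutations. Standard candidates are the pebbling contradictions on pyramid or similar DAGs, or the ordering/$GT_n$ formulas. Any such family immediately yields the desired exponential lower bound for $\CNDND^{J}$+WDLS by the first step.

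The third step is to give a matching polynomial upper bound for $\CNDND^{J,R}$+WDLS on $\{F_n\}$. Here restarts are essential: learned WDLS clauses persist across restarts and can participate in unit propagation starting from an empty trail. The plan is to mirror a Pipatsrisawat--Darwiche-style clause-by-clause simulation. For each clause $C$ in a short general resolution refutation of $F_n$, the solver non-deterministically makes exactly the decisions corresponding to falsifying the literals of $C$ (or a carefully chosen superset); this triggers a conflict via existing clauses and previously learned WDLS clauses, and the newly learned WDLS clause then subsumes or implies $C$. A restart clears the trail, and the solver proceeds to the next clause of the refutation. Iterating through the entire refutation produces the empty clause in polynomial time.

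The main obstacle is that WDLS records only decision literals, so the clause actually learned after forcing a conflict may be strictly weaker than the intended target $C$ (if the conflict is reached before all intended decisions have been placed on the trail) or strictly larger than $C$. One must verify that the decision schedule can always be arranged so that the clause that does get learned is still strong enough to drive the rest of the simulation, for instance by invoking a characterisation of the resolution derivation as a sequence of trivial (input-like) sub-derivations whose negated premises can be replayed as decisions. Carrying out this analysis robustly enough to handle every step of the chosen refutation of $F_n$, without appealing to the full power of asserting clause learning, is where the bulk of the technical work lies.
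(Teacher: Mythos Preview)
Your lower-bound argument is the same as the paper's: under WDLS every learned clause is the negation of the current decision sequence, so backjumping goes back exactly one level, the clause propagates the flip of the last decision, and thereafter it stays satisfied; the run is literally a $\DNDND^{T}$ run, hence bounded below by tree-like resolution. Choosing any family separating tree-like from general resolution is also what the paper (implicitly) does.

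The gap is in your upper bound. You propose to redo a Pipatsrisawat--Darwiche clause-by-clause simulation by hand for WDLS, and you correctly flag an obstacle: the decision-only clause you actually learn may not be the target clause $C$ from the refutation. But this obstacle does not need to be overcome from scratch. The observation you are missing is that WDLS is an \emph{asserting} learning scheme: the learned clause $\neg d_1 \vee \cdots \vee \neg d_k$ contains exactly one literal, $\neg d_k$, at the current decision level. The main theorem of Pipatsrisawat and Darwiche (2011) is stated for CDCL with \emph{any} asserting learning scheme and restarts, not just 1UIP, and already handles the bookkeeping you describe (absorption of clauses rather than exact learning of $C$). Hence $\CNDND^{J,R}$+WDLS $p$-simulates general resolution as a black-box consequence, and no bespoke simulation is needed. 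This is exactly how the paper closes the argument in two lines; your ``bulk of the technical work'' evaporates once you invoke that theorem.
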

\begin{proof}
The solver model $\CNDND^J$ with WDLS is only as powerful as $\DNDND^T$, since each learnt clause will only be used once for propagation after the solver backtracks immediately after learning the conlict clause, and remains satisfied for the rest of the solver run. This is exactly how $\DNDND^T$ behaves under the same circumstances. On the other hand,  WDLS is an asserting learning scheme~\cite{pipatsrisawat2008new}, and hence satisfies the conditions of the main theorem in~\cite{pipatsrisawat2011power}, proving that CDCL with any asserting learning scheme and restarts p-simulates general resolution. Thus, we immediately have $\CNDND^{J,R}$ with WDLS is exponentially more powerful than the same solver but with no restarts (for unsatisfiable instances).
\qed
\end{proof}
\section{Related Work}

\noindent{\bf Previous Work on Theoretical Understanding of Restarts:} Buss et al.~\cite{BussHJ08} and Van Gelder~\cite{VanGelder05PoolResolution} proposed two proof systems, namely regWRTI and pool resolution respectively, with the aim of explaining the power of restarts in CDCL solvers. Buss et al. proved that regWRTI is able to capture exactly the power of CDCL solvers with {\it non-greedy BCP} and without restarts and Van Gelder proved that pool resolution can simulate certain configurations of DPLL solvers with clause learning. As both pool resolution and regWRTI are strictly more powerful than regular resolution, a natural question is whether formulas that exponentially separate regular and general resolution can be used to prove lower bounds for pool resolution and regWRTI, thus transitively proving lower bounds for CDCL solvers without restarts. However, since Bonet et al.~\cite{bonet2014improved} and Buss and Ko\l{}odziejczyk~\cite{BK14SmallStone} proved that all such candidates have short proofs in pool resolution and regWRTI, the question of whether CDCL solvers without restarts are as powerful as general resolution still remains open.

\noindent{\bf Previous Work on Empirical Understanding of Restarts:} The first paper to discuss restarts in the context of DPLL SAT solvers was by Gomes and Selman~\cite{gomes2000heavy}. They proposed an explanation for the power of restarts popularly referred to as ``heavy-tailed explanation of restarts''. Their explanation relies on the observation that the runtime of randomized DPLL SAT solvers on satisfiable instances, when invoked with different random seeds, exhibits a heavy-tailed distribution. This means that the probability of the solver exhibiting a long runtime on a given input and random seed is non-negligible. However, because of the heavy-tailed distribution of solver runtimes, it is likely that the solver may run quickly on the given input for a different random seed. This observation was the motivation for the original proposal of the restart heuristic in DPLL SAT solvers by Gomes and Selman~\cite{gomes2000heavy}.

Unfortunately, the heavy-tailed explanation of the power of restarts does not lift to the context of CDCL SAT solvers. The key reason is that, unlike DPLL solvers, CDCL solvers save solver state (e.g., learnt clauses and variable activities) across restart boundaries. Additionally, the efficacy of restarts has been observed for both deterministic and randomized CDCL solvers, while the heavy-tailed explanation inherently relies on randomness. Hence, newer explanations have been proposed and validated empirically on SAT competition benchmarks. Chief among them is the idea that ``restarts compact the assignment trail during its run and hence produce clauses with lower literal block distance (LBD), a key metric of quality of learnt clauses''~\cite{liang2018machine}.

\noindent{\bf Comparison of Our Separation Results with Heavy-tailed Explanation of Restarts:} A cursory glance at some of our separation results might lead one to believe that they are a complexity-theoretical analogue of the heavy-tailed explanation of the power of restarts, since our separation results are over randomized solver models. We argue this is not the case. First, notice that our main results are for drunk CDCL solvers that save solver state (e.g., learnt clauses) across restart boundaries, unlike the randomized DPLL solvers studied by Gomes et al.~\cite{gomes2000heavy}. Second, we make no assumptions about independence (or lack thereof) of branching decisions across restarts boundaries. In point of fact, the variable selection in the CDCL model we use is non-deterministic. Only the value selection is randomized. More precisely, we have arrived at a separation result without relying on the assumptions made by the heavy-tailed distribution explanation, and interestingly we are able to prove that the ``solver does get stuck in a bad part of the search space by making bad value selections''. Note that in our model the solver is free to go back to ``similar parts of the search space across restart boundaries''. In fact, in our proof for CDCL with restarts, the solver chooses the same variable order across restart boundaries.

\section{Conclusions}
In this paper, we prove a series of results that establish the power of restarts (or lack thereof) for several models of CDCL and DPLL solvers. We first showed that CDCL solvers with backtracking, non-deterministic dynamic variable selection, randomized dynamic value selection, and restarts are exponentially faster than the same model without restarts for a class of satisfiable instances. Second, we showed CDCL solvers with VSIDS variable selection and phase saving without restarts are exponentially weaker than the same solver with restarts, for a family of unsatisfiable formulas. Finally, we proved four additional smaller separation and equivalence results for various configurations of DPLL and CDCL solvers. 

By contrast to previous attempts at a ``theoretical understanding the power of restarts'' that typically assumed that variable and value selection heuristics in solvers are non-deterministic, we chose to study randomized or real-world models of solvers (e.g., VSIDS branching with phase saving value selection). The choices we made enabled us to more effectively isolate the power of restarts in the solver models we considered. This leads us to the belief that the efficacy of restarts becomes apparent only when the solver models considered have weak heuristics (e.g., randomized or real-world deterministic) as opposed to models that assume that all solver heuristics are non-deterministic.

\bibliographystyle{splncs04}
\bibliography{reference}

\newpage
\section*{Appendix}


\subsection*{1.\ \ \ \ Proof of Lemma~\ref{lemma: solver_cant_screw_us} (See page \pageref{lemma: solver_cant_screw_us})}

\textbf{Lemma~\ref{lemma: solver_cant_screw_us}. } \label{appendix_lemma_3}
{\it
	Let $G$ be any graph of degree at least $d$. Suppose that $\CNDRD$ has made $\delta < \min (d-1, \log n-1)$ decisions since its invocation on $Ladder_n(G,f)$. Let $\pi_\delta$ be the current trail, then
	\begin{enumerate}
		\item The solver has yet to enter a conflict, and thus has not learned any clauses.
		\item The trail $\pi_\delta$ contains variables from at most $\delta$ different blocks $\ell^i$.
	\end{enumerate}
}

	To prove this lemma, we will first argue that  solver $\CNDRD$ is well-behaved on the formula
	\[ F : = \{ L^i \Rightarrow C^i : 0 \leq i \leq n-2\} \cup \{C^{n-1} \Rightarrow EQ\}. \]

\begin{lemma}\label{lemma: implications_never_conflict}
Let $\pi_{\delta-1}$ be the trail of $\CNDRD$ after $\delta-1 < \min(\log n-2,d-2)$ decisions from invocation on $F$. Suppose that the solver has yet to encounter a conflict. Let $\pi_\delta$ be obtained from $\pi_{\delta-1}$ by deciding some literal $x$ and applying unit propagation. Then,
	\begin{enumerate}
		\item The solver does not encounter a conflict from deciding $x$ and propagating.
		\item If $\pi_\delta$ does not contain an assignment $c_i = 0$, then $\pi_\delta$ contains assignments to at most $\delta$ $c$-variables.	
		\item The trail $\pi_\delta$ contains variables from at most $\delta$ different blocks $\ell^i$.
	\end{enumerate}
\end{lemma}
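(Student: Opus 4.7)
The plan is to induct on $\delta$, carrying two structural invariants alongside the three claims. \textbf{(I)} Every touched block $\ell^i$ either has all its currently assigned variables equal to a common value $w_i$, or has had a mismatch created by a past decision, in which case every $c_m$ has been propagated to $\mathit{bin}(i,m)$. \textbf{(II)} For each touched block $i$ that is consistent and not yet fully filled, every assigned $c_m$ already equals $\mathit{bin}(i,m)$; otherwise unit propagation on a type-(a) clause $(\neg \ell^i_a \vee \ell^i_b \vee c_m^{\mathit{bin}(i,m)})$ would have already filled the block. The base case $\delta = 0$ is trivial.

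For the inductive step I would split on whether the new decision $x$ is a $c$-variable or an $\ell$-variable, and in the latter case further split on whether the block is already touched and whether the decided value matches $w_i$. Two clause-level observations make this tractable. First, a type-(a) clause has only three non-trivial literals, so it propagates only when two of them are falsified. Second, the clauses arising from $C^{n-1} \Rightarrow EQ$ have the form $\bigvee_m \neg c_m \vee D$ and are guarded by all $\log n$ negated $c$-literals. A $c$-variable is either set by direct decision (of which there are at most $\delta \le \log n - 2$) or propagated after a mismatch in some block $i \le n-2$, which always forces at least one $c_m = 0$. Hence the scenario ``all $c_m = 1$'' is unreachable, so the type-(b) clauses never become unit or falsified; this immediately rules out cross-block propagation via $EQ$, giving claim~(3). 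Claim~(2) also follows: when the hypothesis ``no $c_m = 0$'' holds, no $c$-variable has been propagated, and the count of assigned $c$-variables is bounded by the number of decisions.

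The heart of the proof is claim~(1): no type-(a) clause can be falsified. Such a conflict would require a mismatch in some block $i$ together with a pre-existing $c_m$ disagreeing with $\mathit{bin}(i,m)$. A mismatch can only be born from a decision $\ell^i_j = 1 - w_i$ on an already-touched, consistent block, and invariant~(II) guarantees that at that instant every assigned $c_m$ already agrees with $\mathit{bin}(i,m)$, so the forced propagations $c_m := \mathit{bin}(i,m)$ are all consistent with the trail. The main obstacle I expect is analyzing the cascade that follows: every $c_m$ gets set, and this in turn can cause previously-touched blocks $i' \ne i$ (for which $\mathit{bin}(i,\cdot)$ and $\mathit{bin}(i',\cdot)$ disagree in some bit) to fill uniformly to $w_{i'}$. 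One must verify that this cascade preserves (I) and (II) and introduces no second mismatch and no inconsistent $c$-value; this follows by noting that once block $i'$ fills to $w_{i'}$, every type-(a) clause on $i'$ contains a literal already satisfied, so propagation on block $i'$ and on the $c$-variables terminates without further effect.
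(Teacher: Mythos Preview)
Your proof is correct and takes a genuinely different route from the paper for part~(1). The paper establishes~(1) by proving a separate claim (Claim~7 in the appendix) that $F[\pi_{\delta-1}, x=\alpha]$ is \emph{satisfiable}, exhibiting an explicit model in each case; soundness of unit propagation then rules out conflicts in one line. You instead argue \emph{syntactically}, tracing the BCP cascade directly with invariants (I) and~(II). Your invariant~(II) is the key replacement for the paper's satisfiability certificate: it guarantees that at the instant a mismatch is born, every already-assigned $c_m$ agrees with the value about to be forced, so no type-(a) clause is falsified. This avoids the detour through building model assignments, at the price of having to analyse the cascade (including the consistent filling of other touched blocks) by hand. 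For parts~(2) and~(3) the two proofs coincide: both rest on the observations that $c$-propagation via type-(a) always sets some $c_m=0$ and that the $EQ$ clauses remain inert.

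Two points you should make explicit when writing it up. First, justify that a mismatch is born only from an $\ell$-decision: this needs the remark that type-(a) propagation of an $\ell$-variable always extends a block with its current value~$w_i$, so propagation cannot create a mismatch. Second, the cascade analysis should note that even on \emph{intermediate} BCP states after a mismatch in block~$i$, the $EQ$ clauses cannot fire: since $i\le n-2$, at most $\log n-1$ of the bits $\mathit{bin}(i,m)$ equal~$1$, and by~(II) all previously assigned $c$'s already match $\mathit{bin}(i,\cdot)$, so at no point are all $\log n$ $c$-variables equal to~$1$.
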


We will delay the proof, and first argue that until the solver has made at least $\min(\log n-1, d)$ decisions from its invocation, the clauses $C^i \Rightarrow Tseitin(G,f) $ will not impact its behaviour. That is, until the solver has made $\min(\log n-1, d)$ decisions, the solver will behave on $Ladder_n(G,f)$ as if it were run on $F$.

\begin{definition}
	We say that $\CNDRD$ \emph{behaves identically} on two formulas $F$ and $F'$ if given a trail $\pi$, a decision variable $x$, and an assignment $\alpha$, they produce the same trail $\pi'$ and learned clauses after setting $x=\alpha$.
\end{definition}

\begin{clm}\label{claim:identical_behaviour}
	Let $\pi_{\delta-1}$ be any trail of $\CNDRD$ after $\delta-1 < \min(\log n-2,d-2)$ decisions from invocation on $Ladder_n(G,f)$. Suppose that the solver has yet to encounter a conflict, and $\pi_{\delta-1}$ sets variables from at most $\delta-1$ blocks $\ell^i$. Then, for any $(x,\alpha)$, $\CNDRD$ behaves identically on $Ladder_n(G,f)$ and $F$.
\end{clm}
\begin{proof}
	Let $\pi_\delta$ be a trail (closed under unit propagations) which sets variables from less than $\min (\log n-2, d-2)$ blocks $\ell^i$ and let $x$ be the current decision variable and $\alpha$ be its assignment. Observe that the clauses of $C^i \Rightarrow Tseitin(G,f)$ each depend on variables from $d$ different blocks $\ell^i$. Thus, these clauses cannot cause any conflicts or propagations (and thus cause the behaviour on $F$ and $Ladder_n(G,f)$ to differ on decision $x=\alpha$) unless the clauses of $Ladder_n(G,f) \setminus \{C^i \Rightarrow Tseitin(G,f) : 0 \leq i \leq n-2\} =F$ cause the solver to set more than a single literal from a previously untouched block $\ell^i$. Because this propagation would depend only on the clauses of $F$, this would contradict Lemma~\ref{lemma: implications_never_conflict}. \qed
\end{proof}

\begin{proof}[of Lemma~\ref{lemma: solver_cant_screw_us}]
	The proof follows by combining Lemma~\ref{lemma: implications_never_conflict} and Claim~\ref{claim:identical_behaviour} together with induction on the number $\delta$ of decisions made thus far. Note that the hypotheses of Lemma~\ref{lemma: implications_never_conflict} and Claim~\ref{claim:identical_behaviour} are satisfied for $\delta=0$. \qed
\end{proof} 

It remains to prove Lemma~\ref{lemma: implications_never_conflict}. The proof of the first point will rely on the following lemma which says that $F$ is satisfiable provided we have not set all $c$-variables to $1$. This will be crucial in the proof of statement 1 in Lemma~\ref{lemma: implications_never_conflict}. Indeed, the unit propagator cannot propagate a formula into a conflict if there is a satisfying assignment.  

	\begin{clm}\label{claim: satisfying_implications}
Let $\pi$ be a partial assignment to the variables of $F$ such that $\pi$ is closed under unit propagation, $\pi$ does not falsify any clause of $F$, and either $\pi$ sets some $c_i=0$ or there are at least two $c$-variables that have not been set by $\pi$. 
Then, for any variable $x$ not set by $\pi$,  $F[\pi, x=\alpha]$ is satisfiable for any $\alpha \in \{0,1\}$. 
\end{clm}

\begin{proof}
	First, observe that $F[\pi]$ is satisfiable by the assignment $\rho$ extending $\pi$:
		\begin{align*}
		\rho(\ell^i_j) &:= 
		\begin{cases}
 			\pi(\ell^i_j) &\mbox{if $\pi$ sets $\ell^i_j$} \\
 			\beta &\mbox{if there exists $k \in [\log n]$ such that $\pi(\ell^i_k) = \beta$.} \\
 			0 &\mbox{otherwise} 
 	\end{cases} \\
 	\rho(c_m) &:= \begin{cases} \pi(c_m) &\mbox{if $\pi$ sets $c_m$}\\ 0 &\mbox{otherwise}\end{cases} &
	\end{align*}
	To see that this is satisfying, first note that $C^{n-1}$ is falsified as $\rho$ sets some $c_i = 0$. Therefore, the clauses of $C^{n-1} \Rightarrow EQ$ are satisfied. To see that the clauses of $L^i \Rightarrow C^i$ are satisfied, first observe that if $\rho$ sets all variables in a block $\ell^i$ to the same value then this falsifies $L^i$ and satisfies $L^i \Rightarrow C^i$.
	
	\begin{clm} \label{clm:at_most_one_conflicting_pair}
		There can be at most a single $0 \leq i \leq n-2$ for which there is $j,k \in [\log n]$ such that $\pi[\ell^i_j] = 0$ and $\pi[\ell^i_k] = 1$. Furthermore, the existence of such a pair $\ell^i_k, \ell^i_j$ forces $\pi$ to set $\pi(c_m) = bin(i,m)$ for all $m \in [\log n]$.
	\end{clm}
	\begin{proof}
		This follows because the clausal expansion of $L^i \Rightarrow C^i$ contains $(\ell^i_j \vee \neg \ell^i_k \vee c_m^{bin(i,m)})$ for all $m \in [\log n]$. Because $\pi$ is closed under unit propagation, these clauses must have forced $\pi$ to set $c_m = bin(i,m)$ for all $m \in [\log n]$. Therefore, if there was a $h \neq i$ for which some $\pi[\ell^h_p] =0$ and $\pi[\ell^h_q] = 1$, then the clause $(\ell^h_p \vee \neg \ell^h_q \vee c_m^{bin(h,m)})$ for which $bin(h,m) \neq bin(i,m)$ would be falsified by $\pi$. \qed
	\end{proof}
	Therefore, there is at most one block $\ell^i$ that for which all variables are not set to the same value by $\rho$. In which case $\pi(c_m) = bin(i,m)$ for all $m \in [\log n]$ and $L^i \Rightarrow C^i$ is satisfied.

	Next, we construct a satisfying assignment for $F[\pi,x=\alpha]$.  If $x = c_{m^*}$ for some $m^* \in [\log n]$, then define 
	\begin{align*} \rho_c(\ell^i_j) &:= \rho(\ell^i_j) &\forall~ 0\leq i \in n-2, j \in [\log n] 
	\\
	\rho_c(c_m) &:= \begin{cases} \rho(c_m) &\mbox{if $m \neq m^*$} \\
	\alpha &\mbox{otherwise}
 	\end{cases}	&\forall~ m \in [\log n]
	\end{align*}
	By Claim~\ref{clm:at_most_one_conflicting_pair}, does not exist $0\leq i \leq n-2$, $j,k \in [\log n]$ such that   $\pi[\ell^i_j]=0$ and $\pi[\ell^i_k]=1$, as this would imply that all $c$-variables had already been set by $\pi$. Thus, $\rho$, and therefore $\rho_c$, must set all variables in each block $\ell^i$ to the same value. As well, because $\pi$ has at least two $c$-variables unset (one of which is $x$) $\rho_c$ must set at least one $c_i=0$ and therefore satisfy $C^{n-1} \Rightarrow EQ$.
	
	Next, suppose that $x =\ell^{i^*}_{j^*}$ for some $0 \leq i^* \leq n-2$, $j^* \in [\log n]$. If every variable in block $\ell^{i^*}$ either has not been set by $\pi$, or has its value set to $\alpha$, then $\rho$ is a satisfying assignment to $F[\pi,x=\alpha]$. So, suppose that this is not the case. Assume wlog that $\alpha = 0$ and let $k^* \in [\log n]$ be such that $\pi[\ell^{i^*}_{k^*}] = 1$. Then, we claim that there does not exist $i \neq i^*$ such that there is $p,q \in [\log n]$ for which $\pi[\ell^i_p] = 0$ and $\pi[\ell^i_q]=0$. Indeed, by Claim~\ref{clm:at_most_one_conflicting_pair} this would imply that the $\pi(c_i) = bin(i,m)$ for all $i \in [\log n]$. Let $m^*$ be such that $bin(i,m) \neq bin(i,m^*)$ then, because $\pi$ sets $c_{m^*}$ and $\ell^{i^*}_{k^*}$ the clause of $L^{i^*} \Rightarrow C^{i^*}$,
	 \[(\ell^{i^*}_{j^*} \vee \neg \ell^{i^*}_{k^*} \vee c_{m^*}^{bin(i,m^*)})[\pi] = (\ell^{i^*}_{j^*})\] 
	 That is, it is unitary under $\pi$, contradicting our assumption that $\pi$ is closed under unit propagation. Next, observe that for every $c_m$ for $m \in [\log n]$ that $\pi$ has set, we must have $\pi[c_m] = bin(i^*,m)$, for every $p \in [\log n]$, otherwise the clause of $L^{i^*} \Rightarrow C^{i^*}$, 
	\[ (\ell^{i^*}_p \vee \neg \ell^{i^*}_k \vee c_m^{bin(i^*,m)})[\pi] = (\ell^{i^*}_{j^*}),\]
	which contradicts our assumption that $\pi$ is closed under unit propagation. 	
	Define $\rho_\ell$ extending $\pi$ as 
	\begin{align*}
		&\rho_\ell(\ell_k^j) := \rho(\ell^j_k) &\forall j \neq i^*, 0 \leq j \leq n-2,  k \in [\log n] 
		\\
		&\rho_\ell(\ell_k^{i^*}) := \begin{cases} \pi(\ell^{i^*}_k) &\mbox{ if $\pi$ sets $\ell^{i^*}_k$} \\
		\alpha &\mbox{ if $\ell^{i^*}_k = x$}\\ 0&\mbox{otherwise} \end{cases} &\forall k \in [\log n] 
		\\
		&\rho_\ell(c_m) := bin(i^*,m) &\forall m \in [\log n ]
	\end{align*}
	We claim that $\rho_\ell$ is a satisfying assignment. Indeed, as argued previously all clauses $L^i \Rightarrow C^i$ for $i \neq i^*$ are satisfied because $\rho_\ell$ sets all variables in the block $\ell^i$ to the same value. $L^{i^*} \Rightarrow C^{i*}$ is satisfied because we have set $c_m = bin(i^*,m)$ for all $m \in [\log n]$. Finally, $C^{n-1}$ is falsified because $i^* \neq n-1$ and therefore $C^{n-1} \Rightarrow EQ$ is satisfied. \qed
\end{proof}

Finally, we are ready to prove Lemma~\ref{lemma: implications_never_conflict}.

\begin{proof}[of Lemma~\ref{lemma: implications_never_conflict}]
	The proof is by induction on $\delta$. That it holds for $\delta=0$ is trivial because $F$ does not contain any unit clauses. 
	
	To prove (1), observe that the trail $\pi_{\delta-1}$ satisfies the hypothesis of Claim~\ref{claim: satisfying_implications}. Therefore, $F[\pi,x=\alpha]$ is satisfiable for every $\alpha \in \{0,1\}$. Because thee unit propagator is sound, the unit propagator will not falsify a clause of the formula and cause the solver to enter a conflict. 
	
	To prove (2), suppose that $\pi_{\delta-1}$ sets at most $\delta-1 < \min( \log n-2, d-2)$ many $c$-variables, all of which are set to $1$. We argue that if $\pi_\delta$ does not set some $c_i=0$, then it sets at most one additional $c$-variable to $1$. First, observe that because each clause of $C^{n-1} \Rightarrow EQ$ depends on at least $\log n$ $c$-variables, these clauses cannot cause any propagations unless setting $x=\alpha$ causes the clauses of $\{C^i \Rightarrow L^i : i \in [n-2]\}$ to propagate more than one $c$-variable in the $\delta$th during this decision level (i.e. $\pi_{\delta}$ sets at least $2$ more $c$-variables than $\pi_{\delta-1}$). Therefore, we will restrict attention to the clauses of $\{C^i \Rightarrow L^i : i \in [n-2]\}$ and show that these clauses cannot cause the solver to set more than a single $c$-variable in the $\delta$th decision level without setting some $c_i$ to $0$. 
	
	First, because $\pi_\delta$ does not set any $c_i =0$, there cannot exist $0 \leq i \leq n-2$, $j,k \in [\log n]$ such that $\pi_\delta[\ell^i_j]=0$ and $\pi_\delta[\ell^i_k] =1$. This is because there exists $m \in [\log n]$ such that $bin(i,m)=0$ as $i < n-1$, and so the clause $(\ell^i_j \vee \neg \ell^i_k \vee c_m^{bin(i,m)})$ of $L^i \Rightarrow C^i$ would propagate some $c_m = bin(i,m) = 0$. Therefore, if $x$ is a variable $\ell^i_j$ then all variables in $\ell^i$ that have been set by $\pi_{\delta-1}$ must take the same value $\alpha \in \{0,1\}$. If no variable in $\ell^i$ has previously been set, then $x$ may be set freely. In the second case, $x$ must be set to $\alpha$ so as to not propagate some $c_i$ to $0$. This cannot cause a propagation either, as if we can only set $\ell^i$ variables to the same value, then each clause $(\ell^i_j \vee \neg \ell^i_k \vee c_m^{bin(i,m)})$ of $L^i \Rightarrow C^i$ must either contain an unset $\ell^i$-variable, or be satisfied. In both of these cases, we have set not set any additional $c$-variables.
	
	Finally, suppose that $x = c_{m^*}$ for some $m^* \in [\log n]$. We claim that setting $x=1$ cannot propagate any $c$-variables without setting some $c_i = 0$. The only way that setting a $c$-variable can propagate another $c$-variable is if there exists $0 \leq i \leq n-2, k,h \in [\log n]$ such that $\pi_{\delta-1}[\ell^i_k] = 1$ and $\pi[\ell^i_h ]= 0$ for which $bin(i,m^*) = 0$. In this case, consider the pair of clauses from $L^i \Rightarrow C^i$,
\[ (\ell^i_j \vee \neg \ell^i_k \vee c_{m^*}^{bin(i,{m^*})}),~ (\ell^i_{h} \vee \neg \ell^i_{j} \vee c_{m}^{bin(i,m)}). \]
 Then, setting $x= c_{m^*} = 1$ will propagate $c_m = bin(i,m)$. 
However, in the clausal expansion of $L^i \Rightarrow C^i$ there exists a clause $(\ell^i_h \vee \neg \ell^i_j \vee c_{p}^{bin(i,p)})$ for every $p \in [\log n]$, and so this would propagate \emph{every} $c$-variable. Because $0 \leq i \leq n-2$, there exists some $q \in [\log n]$ such that $bin(i,q) =0$. Thus, this would set some $c_q = 0$, contradicting our assumption that $\pi_{\delta}$ does not set any $c$-variable to $0$.

To prove (3), assume by induction that $\pi_{\delta-1}$ has set variables from at most $\delta -1$ different blocks $\ell^i$. Let $i^*$ be such that $\pi_{\delta-1}$ does not set any variables in block $\ell^{i^*}$. The only clauses that involve variables from $\ell^{i^*}$ belong to  $L^{i^*} \Rightarrow C^{i^*}$ and $C^{n-1} \Rightarrow EQ$. The clauses of the latter each depend on all $\log n$ $c$-variables. Either some $c$-variable has been set to $0$ by $\pi_\delta$ or by (2), at most $\delta < \log n-1$ $c$-variables have been set by $\pi_{\delta}$. In the first case, $C^{n-1}$ is falsified and thus all clauses of $C^{n-1} \Rightarrow EQ$ are satisfied. In the second case, $\pi_{\delta}$ has not set enough $c$-variables for these clauses to cause any propagations. 

Next, consider the clauses of $L^{i^*} \Rightarrow C^{i^*}$ which are all of the form $(\ell^{i^*}_j \vee \neg \ell^{i^*}_k \vee c_m^{bin(i^*,m)})$ for some $j,k,m \in [\log n]$. By assumption, both $\ell^{i^*}_j$ and $\ell^{i^*}_k$ have not been set by $\pi_\delta$, and therefore cannot be unit propagated unless some variable from the block $\ell^{i^*}$ is set. Therefore, variables from block $\ell^{i^*}$ will be set by $\pi_\delta$ only if $x$ belongs to $\ell^{i^*}$. As $x$ can only belong to at most one block $\ell^i$, variables from at most one additional block $\ell^i$ can be set by $\pi_\delta$ than were set by $\pi_{\delta-1}$. \qed
\end{proof}

\subsection*{2.\ \ \ \  Proof of  Lemma~\ref{lemma:reduction_to_tseitin} (See page \pageref{lemma:reduction_to_tseitin})}

\textbf{Lemma~\ref{lemma:reduction_to_tseitin}. }
{\it
	Let $\pi$ be any restriction such each clause of $Ladder_n(G,f)[\pi]$ is either satisfied or contains at least two unassigned variables, and $\pi$ implies Tseitin.
	Suppose that $\pi$ sets variables from at most $\delta$ blocks $\ell^i$. Then there exists a restriction $\rho_\pi^*$  that sets at most $\delta$ variables of $\textit{Tseitin}(G,f)$ such that 
	\[ Res(Ladder_n(G,f)[\pi] \vdash \emptyset ) \geq Res(Tseitin(G,f)[\rho_\pi^*] \vdash \emptyset ).\]
}

The proof of this lemma will rely on the fact that the size of Resolution proofs is closed under both restrictions and projections. 

\begin{definition}
Let $F$ be any formula over the variables $\{x_1,\ldots, x_n\}$ define a \emph{projection} as any map $I: \{x_1,\ldots, x_n\} \rightarrow \{x_1,\ldots,x_k\}$ for $k \leq n$. Define the \emph{projected formula} $proj_I(F)$ under the projection $I$ by replacing each occurrence of a variable $x_j$ in $F$ by $I(x_j)$.
\end{definition}
That is, a projection may identify several variables as a single variable.  

\begin{lemma}
\label{lemma: Projections_Preserve_Size}
	Let $F$ be any CNF formula. Then,
	\begin{itemize}
		\item For any restriction $\rho \in \{0,1,*\}^n$, $Res(F \vdash \emptyset) \geq Res(F[\rho] \vdash \emptyset)$.
		\item For any projection $I$, $Res(F \vdash \emptyset) \geq Res(proj_I(F) \vdash \emptyset)$. 
	\end{itemize}
\end{lemma}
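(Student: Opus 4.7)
The plan is to prove both parts of the lemma by a single inductive construction. Given a general resolution refutation $\Pi = C_1, \ldots, C_s = \emptyset$ of $F$ of size $|\Pi| = s$, I construct a sequence $\Pi' = C_1', \ldots, C_s'$ that forms a valid refutation of $F[\rho]$ (respectively $proj_I(F)$), maintaining throughout the invariant $C_i' \subseteq C_i[\rho]$ (respectively $C_i' \subseteq proj_I(C_i)$), and contributing at most one new clause to $\Pi'$ per clause of $\Pi$. Since $C_s = \emptyset$, the invariant forces $C_s' = \emptyset$, so $\Pi'$ is a refutation of size at most $|\Pi|$.

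For the restriction case, I would process $\Pi$ clause by clause. If $C_i \in F$ is an axiom, then $C_i[\rho]$ is either satisfied by $\rho$ (in which case I skip it, adding nothing to $\Pi'$) or is itself a clause of $F[\rho]$, which I take as $C_i'$. If $C_i$ is the resolvent of $C_j = A \vee x$ and $C_k = B \vee \neg x$ on pivot $x$, there are two subcases. If $\rho$ leaves $x$ unassigned, then $C_j'$ and $C_k'$ still contain the pivot (by the invariant together with how axioms were chosen), and resolving them on $x$ yields a subclause of $A[\rho] \cup B[\rho] = C_i[\rho]$, which I take as $C_i'$. If $\rho(x) = 1$, then $C_j[\rho]$ is satisfied while $C_k' \subseteq B[\rho] \subseteq C_i[\rho]$, so I simply inherit $C_i' := C_k'$ without any new inference; the case $\rho(x) = 0$ is symmetric.

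For the projection case the skeleton is identical. Axioms project to clauses of $proj_I(F)$ by definition, so I set $C_i' = proj_I(C_i)$ (or omit it if it is a tautology). For a resolution step on pivot $x$, let $z := I(x)$ and write $C_j = A \vee x$, $C_k = B \vee \neg x$. In the generic case, $C_j'$ and $C_k'$ contain $z$ and $\neg z$ respectively, and resolving on $z$ yields a subclause of $proj_I(A) \cup proj_I(B) = proj_I(C_i)$. The subtlety is that $I$ may identify $x$ with a variable $y$ already appearing in $A$ or $B$: the pivot literal may be absorbed into an existing copy of $z$ (when polarities agree), or the projected clause may become tautologous (when polarities disagree). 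In the absorption case the projected parent is already a subclause of $proj_I(C_i)$ and I inherit it as $C_i'$; in the tautology case that parent contributes nothing and $C_i'$ is inherited from the other parent, which is still a subclause of $proj_I(C_i)$ because the complementary literal already lies in the projected other half.

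The only real obstacle is handling these degenerate projected steps cleanly, and the subset invariant is exactly the right tool: by allowing $C_i'$ to be a strict subclause of $C_i[\rho]$ or $proj_I(C_i)$ rather than insisting on equality, one can silently absorb collapsed pivots and discard tautologies without ever adding new clauses to $\Pi'$. A single induction on proof length handles both parts uniformly, and the size bound $|\Pi'| \leq |\Pi|$ is immediate because each step of $\Pi$ spawns at most one clause of $\Pi'$.
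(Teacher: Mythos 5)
The paper does not actually prove this lemma---it is stated as folklore and the proof is omitted---so there is no official argument to compare yours against; what you give is the standard folklore proof, and it is essentially correct. One imprecision is worth fixing in the restriction case: when the pivot $x$ is unassigned by $\rho$, you claim that $C_j'$ and $C_k'$ ``still contain the pivot'' as a consequence of the invariant, but the subset invariant alone does not guarantee this, since a derived clause may have been inherited wholesale from a single parent in an earlier degenerate step and thereby lost $x$. The repair is exactly the inheritance move you already use in the other cases: if $x \notin C_j'$ then $C_j' \subseteq A[\rho] \subseteq C_i[\rho]$, so take $C_i' := C_j'$ with no new inference (symmetrically for $\neg x \notin C_k'$). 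Relatedly, the invariant should be stated as ``either $C_i$ is satisfied by $\rho$, or some $C_i' \subseteq C_i[\rho]$ has been derived,'' so that a resolution step one of whose parents is satisfied while the pivot is unassigned is disposed of by noting that $C_i$ itself is then satisfied. With these routine completions the induction closes, and your handling of the projection part---in particular the absorption and tautology degeneracies, where the complementary literal of the collapsed pivot already sits in the projection of the other half of the resolvent, so the surviving parent's clause is still a subclause of $proj_I(C_i)$---is correct and matches what the ``folklore'' reference is implicitly appealing to. The size bound $|\Pi'| \leq |\Pi|$ follows as you say, since each step contributes at most one clause.
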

We omit the proof, but remark that it is folklore. 


\begin{proof}[of Lemma~\ref{lemma:reduction_to_tseitin}]
	Let $\pi$ be any restriction that implies Tseitin and such that each clause of $Ladder_n(G,f)[\pi]$ is either satisfied or contains at least two unset literals (one should think of $\pi$ as a trail produced by $\CNDRD$). Furthermore, suppose that $\pi$ sets variables from at most $\delta$ blocks $\ell^i$. To prove the lemma we will construct a restriction and projection pair $\rho_\pi$, $I_\pi$ consistent with $\pi$ such that
	\begin{align}
 		proj_{I_\pi} \big(Ladder_n(G,f)[\rho_\pi] \big) = Tseitin(G,f)[\rho_\pi^*], \label{eq:equality_of_projected_instances}
 	\end{align}
	for an associated restriction $\rho_\pi^*$ which sets at most $\delta$ variables. The Lemma~\ref{lemma:reduction_to_tseitin} will follow by applying Lemma~\ref{lemma: Projections_Preserve_Size}. 
	
	Define $\rho_{\pi}$ to be the restriction consistent with $\pi$ which sets all remaining $c$-variables, and sets the remaining variables in any block $\ell^i$ that has been partially set by $\pi$ to the same value.  Formally,
	\begin{align*} 
	\rho_\pi(\ell^i_j) &:= \begin{cases} * &\mbox{if $\pi$ does not set any variable in block $\ell^i$} \\
	0 &\mbox{if there exists $p,q \in [\log n]$ such that $\pi[\ell^i_p] \neq \pi[\ell^i_q] \neq *$ } \\
	\beta &\mbox{if $\pi[\ell^i_k] \in \{\beta,*\}$ for every $k \in [\log n]$, for some $\beta \in \{0,1\}$} 
 	\end{cases}	\\
 	\rho_\pi(c_m) &:= \begin{cases} \pi(c_m) &\mbox{if $\pi$ sets $c_m$} \\ 0 &\mbox{otherwise} \end{cases} 
 \end{align*}
Define the restriction $\rho_\pi^*$ as the projection of $\rho_\pi$ to the variables $\{\ell^1_1,\ldots, \ell^{n-1}_1 \}$ of $\textit{Tseitin}(G,f)$, 
\[ \rho_\pi^*(\ell^i_1):=\begin{cases} * &\mbox{if $\pi$ does not assign any variable in block $\ell^i$} \\
 \rho_\pi(\ell^i_1) &\mbox{otherwise.} 
 \end{cases}
 \]
 Observe that $|(\rho_\pi^*)^{-1}(1)| + |(\rho^*_\pi)^{-1}(0)| \leq \delta$ because $\pi$ sets variables from at most $\delta$ blocks $\ell^i$. Finally, define the projection $I_\pi$ as 
 \begin{align*} 
I_\pi(\ell^i_j)&:=\ell^i_1,
\end{align*}
which contracts each variable in the block $\ell^i$ of $Ladder_n(G,f)[\rho_\pi]$ to the single variable $\ell^i_1$ from block $\ell^i$ that $Tseitin(G,f)[\rho_\pi^*]$ depends on. 

Next, we argue that none of the clauses $\{L^i \Rightarrow C^i : 0 \leq i \leq n-2\} \cup \{C^{n-1} \Rightarrow EQ$ in  $Ladder_n(G,f)[\rho_\pi]$ are falsified (it is okay if a clause $C^i \Rightarrow Tseitin(G,f) $ is falsified as the same clause will be falsified in $Tseitin(G,f)[\rho_\pi^*]$ by construction and so the conclusion of Lemma~\ref{lemma:reduction_to_tseitin} holds vacuously). Each clause of $L^i \Rightarrow C^i$ is satisfied by any assignment that sets $\ell^i_j$ to the same value $\beta$. Therefore, we need only consider the case when $\pi[\ell^i_p] = 0$ and $\pi[\ell^i_q] =1$ for some $p,q \in [\log n]$. We claim that there can be at most one such $i$ for which this occurs. The constraint $L^i \implies C^i$ consists of clauses $(\ell^i_p \vee \neg \ell^i_q \vee c_m^{bin(i,m)})$ for every $m \in [\log n]$. Thus, $\pi(c_m) = bin(i,m)$ for every $m \in [\log n]$ as otherwise this would contradict our assumption that all clauses are either satisfied or contain at least 2 unassigned variables. Therefore, the existence of some $j \neq i$ and $p',q' \in [\log n]$ such that $\pi[\ell^j_{p'}] =0$ and $\pi[\ell^j_{q'}] =1$ would mean that the clauses $(\ell^j_{p'} \vee \neg \ell^j_{q'} \vee c_m^{bin(j,m)})$ for $m \in [\log n]$ would contradict our assumption, as there must exist some $m' \in [\log n]$ such that $bin(j,m) \neq bin(i,m)$. Thus, $\rho_\pi$ does not falsify any $L^i \Rightarrow C^i$. Finally, observe that $C^{n-1} \Rightarrow EQ$ is falsified only by assignments that set $c_i = 1$ for all $i \in [\log n]$. Indeed, the clauses of this constraint consist of $( \neg c_1 \vee \ldots \vee \neg c_{\log n} \vee \ell^i_j \vee \neg \ell^p_q)$ for every $0 \leq i,p \leq n-2$, $j,q \in [\log n]$. Under our assumption that there is some $\ell^i_j, \ell^p_q$ such that $\pi[\ell^i_p] = 0$ and $\pi[\ell^i_q] =1$ and that $\pi$ does not falsify any clause, $\pi$ cold not have set $c_i = 0$ for all $i \in [\log n]$. Thus, $C^{n-1} \Rightarrow EQ$ is satisfied under $\rho_\pi$.

It remains to argue that (\ref{eq:equality_of_projected_instances}) holds. All clauses of  $L^i \Rightarrow C^i$ that are not satisfied by $\rho_{\pi_\delta}$ have all of their $\ell$-variables unset. These clauses are of the form $(\ell^i_j \vee \neg \ell^i_k \vee c_m^{bin(i,m)})$. Under the projection $I_{\pi_\delta}$ this clause becomes $(\ell^i_1 \vee \neg \ell^i_1 \vee c_m^{bin(i,m)})$, which is the identically $1$ clause (it contains both $\ell^i_1$ and $\neg \ell^i_1$). 
\qed
\end{proof}

\end{document}